\documentclass[letterpaper,11pt]{article}
\newif\ifdraft
\draftfalse
\usepackage[utf8]{inputenc}
\usepackage{microtype}
\usepackage[letterpaper,margin=1in]{geometry}
\usepackage[numbers,sort&compress]{natbib}
\usepackage{xcolor,xspace}
\usepackage{amsmath}
\usepackage{amssymb}
\usepackage{amsthm}
\usepackage{enumitem}
\usepackage{textgreek}
\usepackage{graphicx}
\usepackage{framed}

\usepackage[small]{caption}
\usepackage{booktabs}
\usepackage[unicode]{hyperref}
\usepackage{cleveref}
\usepackage{doi}
\usepackage{thmtools,thm-restate}
\usepackage[noend]{algorithm2e}
\usepackage[noend]{algpseudocode}
\usepackage{xspace}
\usepackage{nicefrac}

\widowpenalty10000
\clubpenalty10000

\theoremstyle{plain}
\newtheorem{theorem}{Theorem}[section]
\newtheorem{lemma}[theorem]{Lemma}

\newtheorem{corollary}[theorem]{Corollary}
\newtheorem{claim}[theorem]{Claim}

\newtheorem*{conjecture*}{Conjecture}
\newtheorem{observation}[theorem]{Observation}

\theoremstyle{definition}

\theoremstyle{remark}
\newtheorem{remark}[theorem]{Remark}
\newtheorem*{remark*}{Remark}

\newcommand{\CONGEST}{\ensuremath{\mathsf{CONGEST}}\xspace}
\newcommand{\LOCAL}{\ensuremath{\mathsf{LOCAL}}\xspace}

\newcommand{\eps}{\varepsilon}
\newcommand{\poly}{\operatorname{\text{{\rm poly}}}}

\newcommand{\logstar}[1]{\log^{*} #1}

\DeclareMathOperator{\polylog}{\poly\log}

\newcommand{\NN}{\mathbb{N}}

\newcommand{\FF}{\mathbb{F}}

\renewcommand{\phi}{\varphi}

\newenvironment{myabstract}
{\list{}{\listparindent 1.5em%
        \itemindent    \listparindent
        \leftmargin    1cm
        \rightmargin   1cm
        \parsep        0pt}%
    \item\relax}
{\endlist}

\newenvironment{mycover}
{\list{}{\listparindent 0pt
        \itemindent    \listparindent
        \leftmargin    1cm
        \rightmargin   1cm
        \parsep        0pt}%
    \raggedright
    \item\relax}
{\endlist}

\newcommand{\myaff}[1]{\,$\cdot$\, {\small #1}\par\smallskip}

\hypersetup{
    colorlinks=true,
    linkcolor=black,
    citecolor=black,
    filecolor=black,
    urlcolor=[HTML]{0088CC},
    pdftitle={Distributed Graph Coloring Made Easy},
    pdfauthor={Yannic Maus}
}

\begin{document}

\begin{mycover}
    {\huge\bfseries Distributed Graph Coloring Made Easy \par}
    \bigskip
    \bigskip

	\textbf{Yannic Maus\footnote{This research was mostly conducted while the author was employed by the Technion in Israel.}}
	\myaff{TU Graz, Austria}
\end{mycover}
\medskip

\begin{myabstract}
\noindent\textbf{Abstract.}
	In this paper we present a deterministic \CONGEST algorithm  to compute an $O(k\Delta)$-vertex coloring in $O(\Delta/k)+\logstar n$ rounds, where $\Delta$ is the maximum degree of the network graph and $1\leq k\leq O(\Delta)$ can be freely chosen. The algorithm is extremely simple: Each node locally computes a sequence of colors and then it \emph{tries colors} from the sequence in batches of size $k$.  	
	Our algorithm subsumes many important results in the history of distributed graph coloring as special cases, including Linial's color reduction [Linial, FOCS'87], the celebrated locally iterative algorithm from [Barenboim, Elkin, Goldenberg, PODC'18], and various algorithms to compute defective and arbdefective colorings. Our algorithm can smoothly scale between these and also simplifies the state of the art $(\Delta+1)$-coloring algorithm. 	At the cost of losing the full algorithm's simplicity  we also provide a $O(k\Delta)$-coloring algorithm in $O(\sqrt{\Delta/k})+\logstar n$ rounds. We also provide improved deterministic algorithms for ruling sets, and, additionally, we provide a tight characterization for one-round color reduction algorithms.
\end{myabstract}

\section{Introduction}
\label{sec:intro}
In the \emph{$C$-vertex coloring} problem the objective is to assign each vertex of an $n$-node graph $G=(V,E)$ one of $C$ colors such that adjacent vertices get different colors.  In the distributed setting, graph coloring  is considered to be one of the core \emph{symmetry breaking problems} with a huge amount of published work and even a whole book almost exclusively covering the problem  \cite{barenboimelkin_book}. In this setting, the usual goal is to compute a $C$-coloring with $\Delta+1\leq C \leq O(\Delta^2)$, where $\Delta$ is the maximum degree of the graph. The bound of $\Delta+1$ stems from the fact that any graph can be colored with the respective number of colors and this can even be done with a simple sequential greedy algorithm. The bound of $O(\Delta^2)$ colors stems from an algorithm in Linial's seminal paper in which he introduces one of the core models for distributed graph algorithms, i.e., the \LOCAL model \cite{linial92}. In this model,  the graph abstracts a communication network in which the nodes communicate through the edges in synchronous rounds and at the end of the computation each node needs to output its own part of the solution, e.g., its own color. The complexity measure is the number of synchronous rounds. Linial gave an extremely fast $O(\Delta^2)$-coloring algorithm that only uses $O(\logstar n)$ rounds. Further, he showed that $\Omega(\logstar n)$ rounds are needed to color rings ($\Delta=2$) with $O(1)=\Delta^{O(1)}$ colors.
Due to the lower bound, a vast amount  of published papers, e.g., \cite{SzegedyV93,KuhnW06,BarenboimEK14,FHK,Barenboim16,BEG18,MT20}, study the setting that occurs after applying Linial's coloring algorithm, i.e., they ask: \emph{Given an $O(\Delta^2)$-coloring, how fast can one reduce the number of colors where the runtime of the algorithm can only depend on $\Delta$?} The similar question of finding a fast algorithm with complexity $f(\Delta)+\logstar n$ is also sometimes referred to as determining the \emph{truly local complexity} of a problem \cite{MT20}. 

In the current paper, we devise several results to advance the understanding of this setting. First, we provide a \emph{simple} deterministic algorithm that scales between the two extremes of $\Delta+1$ (or rather $O(\Delta)$ colors) and $O(\Delta^2)$ colors. In particular, for a parameter  $k\geq 1$ of the user's choice the algorithm computes a $O(k\Delta)$ coloring with complexity $f(\Delta)=O(\Delta/k)$. In the algorithm, each node $v$ uses its input color, e.g., provided by Linial's algorithm, to (locally) compute a sequence $p_v(0),p_v(1), \ldots $ of colors. Then, node $v$ \emph{tries} to get colored with one of the first $k$ colors in its sequence by sending these trials to its neighbors and receiving their trials. If $v$ tries a color $c$ that is not \emph{conflicting} with the neighbors' trials node $v$ gets permanently colored with color $c$, otherwise $v$ continues to the next round in which it tries the next $k$ colors in its sequence, and so on.  

Second, we show that this simple \emph{mother algorithm} either immediately yields the core steps of the aforementioned papers, e.g., the algorithms in \cite{linial92,BEG18}, or can be mildly adapted to obtain crucial subroutines developed or used in \cite{BarenboimEK14,Kuhn2009WeakColoring,BE09,FHK,Barenboim16,BEG18,MT20}. E.g., a mild adaptation yields \emph{$d$-defective colorings} which were the crucial ingredient in \cite{BarenboimEK14, Kuhn2009WeakColoring,BE09}, or \emph{arbdefective colorings} which were crucial in \cite{FHK, Barenboim16, BEG18, MT20}. A core strength of our result is that the algorithms for each of these results are very similar.  To get a feeling for this let us look at defective colorings. In a \emph{$d$-defective coloring} a node is allowed to have at most $d$ neighbors with the same color.  Besides a suitable choice of $k$ and sequences $p_v(0),p_v(1), \ldots$, the only change is in the execution of the algorithm: When deciding whether to keep a color $c$ a node tolerates up to $d$ neighbors with the same color. This algorithm does not yet do the job as the defect of a node $v$ might be larger than $d$ at the end of the execution if one or more neighbors of $v$ choose the same color as $v$ in later rounds, but it still captures the essence of the adaptations that need to be performed. 

Third, at the cost of losing the full simplicity of the algorithm, we show how to compute an $O(\Delta^{1+\eps})$-coloring in $O(\Delta^{1/2-\eps/2})$ rounds, and we improve the state of the art runtime for computing so called \emph{ruling sets}.

Fourth, we provide a full characterization of $1$-round coloring algorithms. Informally, we determine the maximum number $q_{m,\Delta}$ of colors that can be reduced by a $1$-round coloring algorithm that works for any graph with a given maximum degree $\Delta$ and $m$-input coloring.

Next, we describe why this result, combined with our $O(k\Delta)$-coloring algorithm in $O(\Delta/k)$ rounds might be of additional interest. Already \cite{SzegedyV93,KuhnW06,disc16_coloring} studied $1$-round color reduction algorithms and showed lower bounds like our fourth contribution; then in \cite{SzegedyV93} these $1$-round lower bounds on the number of colors were used to prove a heuristic $\Omega(\Delta\log \Delta)$ runtime lower bound for computing a $(\Delta+1)$-coloring. As already pointed out by \cite{SzegedyV93} the bound is \emph{heuristic} in the following sense (the following example uses the result of our paper): Given a coloring with at most $2\Delta$ colors, we can reduce exactly one color in a single round. By applying this tight bound twice,  one would wish to claim that one cannot go from a $2\Delta$ coloring to a $2\Delta-2$ coloring in two rounds. However, this claim cannot be deduced via this method, as the second application of the $1$-round lower bound assumes that the \emph{intermediate $2\Delta-1$ coloring} is \emph{worst case}. But, instead a $2$-round algorithm might not produce an intermediate coloring at all,  or it might output a very specific intermediate coloring which enables it to reduce more than one color in its second round. The heuristic lower bound in \cite{SzegedyV93} is obtained by applying $1$-round lower bounds iteratively, purposely ignoring this important subtlety. 
Since the publication of \cite{SzegedyV93} at least five different algorithms for $(\Delta+1)$-coloring that beat this lower bound were published: Given an $O(\Delta^2)$-coloring, a $(\Delta+1)$-coloring  was computed in $O(\Delta)$ rounds \cite{BE09, Kuhn2009WeakColoring, BarenboimEK14}, in $O(\Delta^{3/4})$ rounds in \cite{Barenboim16}, in $O(\sqrt{\Delta\log \Delta}\logstar \Delta)$ rounds in \cite{FHK, BEG18} and in $O(\sqrt{\Delta\log \Delta})$ rounds in \cite{MT20}. Formally, only the \emph{locally iterative}  $O(\Delta)$-round algorithm by Barenboim, Elkin and Goldenberg \cite{BEG18} beats this lower bound and we explain why. It starts with an $O(\Delta^2)$ coloring and  maintains a feasible coloring in each round. The color of a node in the next round only depends on the node's own color and the colors of its neighbors. The algorithm was celebrated as it is significantly simpler than the aforementioned faster algorithms and it breaks the aforementioned heuristic lower bound in a \emph{clean way}, due to maintaining a feasible coloring in each round. 

 While we formally do not maintain a feasible coloring in each round in our algorithms\footnote{It is straightforward to tweak the algorithms to actually achieve this at the cost of losing some of the simplicity of the algorithm, e.g., by encoding the state of a node into a proper vertex coloring.},  we provide a different insight.  
 Our \emph{tight} lower bounds for $1$-round algorithms  provide a heuristic argument that shows that one needs $\Omega(\Delta)$ rounds to go from a $O(\Delta^2)$ coloring to a $\Delta^2/5$-coloring (the $5$ is chosen somewhat arbitrarily), and for a suitable $k=\Omega(\Delta)$, our mother algorithm provides a $O(\Delta/k)=O(1)$-round algorithm to perform such a color reduction.  Thus the  heuristic lower bound that is based on the repeated application of \emph{tight} $1$-round lower bounds can be beaten significantly in the number of colors by a simple $O(1)$-round algorithm. In contrast, all the previous algorithms that provide such an insight require $\Delta^{O(1)}$ rounds. 
 This is in particular interesting, as there has been almost no progress in proving lower bounds for the $(\Delta+1)$-coloring problem since Linial's initial seminal $\Omega(\logstar n)$ lower bound. Just as we do in this paper, the only other known lower bounds in the \LOCAL model study $1$-round algorithms \cite{SzegedyV93, KuhnW06,disc16_coloring}. 
Our paper suggests that  even in constant-time coloring algorithms there are still results to be discovered.  In fact, we do not even know a lower bound on the number of colors that a $2$-round algorithm must use. Surprisingly, there is not even a lower bound, that rules out that one can go from a $O(\Delta^2)$ coloring to a $\Delta+1$ coloring in two rounds.
 Further evidence that understanding constant-time or even just $2$-round algorithms is given by \cite{MT20}. It provides a $2$-round algorithm for a \emph{list coloring} variant of Linial's color reduction. In \emph{list coloring} each node $v$ has a list $L(v)$ of colors and needs to output a color from this list. Basically, the authors show that one can compute a list-coloring in two rounds if lists are of size $\Tilde{\Omega}(\Delta^2)$---the 'equivalent' of the $O(\Delta^2)$ colors in Linial's coloring---, and interestingly for their choice of parameters the exact same problem cannot be solved in one round. Of course, just as applying a lower bound for $1$-round algorithms twice does not give a tight lower bound for two rounds it is unclear whether understanding $2$-round algorithms will yield a result for the holy grail, a tight runtime lower bound for $(\Delta+1)$-coloring.
\subsection{Our Contributions}
While we have already explained our contributions from a high level point of view we use this section to state them formally, additional related work is presented afterwards. Our results hold in the \LOCAL model and in the \CONGEST model (both lower and upper bounds). 

\paragraph{The \LOCAL and \CONGEST Model of distributed computing~\cite{linial92,peleg00}.} In both models the graph is abstracted as an $n$-node network $G=(V, E)$ with maximum degree at most $\Delta$. Communication happens in synchronous rounds. Per round, each node can send one message to each of its neighbors. At the end, each node has to know its own part of the output, e.g., its own color. In the \LOCAL model there is no bound on the message size and in the \CONGEST model messages can contain at most $O(\log n)$ bits. Usually, in both models nodes are equipped with $O(\log n)$ bit IDs and initially, nodes know their own ID or their own color in an input coloring but are unaware of the IDs of their neighbors. 

Linial's algorithm treats the unique IDs as an input coloring to compute an $O(\Delta^2)$-coloring in $O(\logstar n)$ rounds, merely in one \emph{color reduction step} he reduces an $m$-input coloring to an $O(\Delta^2\poly\log m)$-coloring,  which then serves as the input coloring for the next step. All of our algorithms do not make use of unique IDs but work in the more general setting where nodes are only equipped with some input coloring with $m$ colors.  Similarly to most previously-known results, we assume that $m$ and $\Delta$ (and sometimes additional parameters) are global knowledge. It is easiest to grasp our results when setting $m=O(\Delta^2)$, that is, one first applies Linial's algorithm. Our main technical result is the following theorem. 
\begin{restatable}{theorem}{motherAlgorithm} \label{thm:mainSimple}
	There exists a distributed deterministic algorithm that performs as follows in any undirected graph $G=(V,E)$: 
	\begin{description}
		\item[Input:] At every node $v\in V$, the algorithm takes as input an integer $m\geq 1$, a color $c_v\in [m]$ such that the colors of the vertices form an $m$-coloring of $G$,  the maximum degree $\Delta$ of $G$, and two integers $d, k$ where $0\leq d\leq \Delta-1$ and $1\leq k\leq X$ for $Z=\frac{\Delta}{(d+1)}$ and $X=4\cdot Z\cdot \lceil\log_{Z}m\rceil$.
		
		\item[Output:] At every node $v\in V$, the algorithm outputs a color in $[kX]$ such that 
		\begin{enumerate}
			\item the graph induced by each color class admits an orientation of its edges with outdegree at most $d$~,
			
			\item each color class can be partitioned into $R=\lceil X/k\rceil$ induced subgraphs $P_1,\ldots,P_R$ of degree at most $d$~.
		\end{enumerate}
	\end{description}
	This algorithm runs in $R=\lceil X/k\rceil$ rounds in the \CONGEST model. The orientation and the partition can be computed with no additional cost in the round complexity.
\end{restatable}
Note that whenever $d\neq 0$, the coloring computed by the algorithm of \Cref{thm:mainSimple} may not be  proper, i.e., neighboring vertices may output the same color. \Cref{cor:allInOne} summarizes the most important parameter settings for \Cref{thm:mainSimple}, including settings to compute proper colorings $(d=0)$. 
While the algorithm for \Cref{thm:mainSimple} is extremely simple (locally compute a permutation of the output colors, try them in  batches of size $k$ and tolerate up to $d$ conflicts), the theorem, stated in its general form, has a rather technical appearance to fit various choices of parameters at once. But, we believe that it is very approachable as soon as one considers precise choices for its parameters. E.g., we can first use Linial's algorithm (or an algorithm derived from \Cref{thm:mainSimple}) to compute a $O(\Delta^2)$-coloring in $O(\logstar n)$ rounds. If we treat this coloring as an input coloring with $m=O(\Delta^2)$ colors and if also $d=0$ or $d=\Delta^\eps$ holds for some constant $0<\eps<1$, one can replace the term $\lceil\log_{Z}m\rceil$ with a constant.  

One step of Linial's color reduction \cite{linial92} is based on a suitable so-called \emph{low-intersecting set family} $S_1,\ldots,S_m$; Linial uses the probabilistic method to show that the respective families exist. In the algorithm a node with input color $i$ simultaneously tries all colors in $S_i$, and as $S_i\cap S_j$ is \emph{small} for each neighbor's input color $j\neq i$, it is guaranteed that at least one color in $S_i$ is not tried by any neighbor.  The low-intersecting set families obtained by the probabilistic method are not strong enough to go from a \mbox{$\poly\Delta$-coloring} to an $O(\Delta^2)$-coloring. Hence Linial uses a different construction for such families, based on polynomials.\footnote{Szegedy and Vishnawatan  show how to compute a coloring with $\poly \Delta$ colors in $0.5\logstar n$ rounds, if this algorithm is followed by $O(1)$ iterations of Linial's color reduction based on polynomials, this implies an $O(\Delta^2)$-coloring algorithm in $0.5\logstar n +O(1)$ rounds \cite{SzegedyV93}.} At its core is the fundamental theorem that for two distinct polynomials $p_1$ and $p_2$ with degree $d$ over a suitable finite set $\mathbb{F}_q$, the sets $S_i=\big\{\big(x,p_i(x)\big) \mid x\in\mathbb{F}_q\big\}$, $i=1,2$ intersect in at most $d$ elements, see \cite[Example 3.2]{EFF85}. Choosing $m$  distinct polynomials yields the respective set family with $m$ sets. 
This argument is also the core of our main result; in particular Linial's $1$-round color reduction is a special  case of our more general \Cref{thm:mainSimple}. Next, we discuss various settings for the parameters in \Cref{thm:mainSimple} and explain which results it subsumes. An approachable summary is contained in \Cref{cor:allInOne}. If $d=0$,  the computed coloring is proper and point (1) and (2) in \Cref{thm:mainSimple} can be ignored. The parameter $k$ trades the number of rounds versus the number of colors. For the extreme choice of $k=X=O(\Delta)$ we obtain the aforementioned color reduction by Linial (the one build on top of polynomials), and for $k=1$ we obtain a generalization of the locally iterative algorithm of \cite{BEG18}. Other values of $k$ scale between both algorithms and provide an extremely simple way to compute $O(\Delta\cdot k)$-coloring in $O(\Delta/k)$ rounds.  While our algorithm for $k=1$ only computes an $O(\Delta)$-coloring in $O(\Delta)$ rounds, we can use an additional $O(\Delta)$ rounds in each of which we remove a single color class to transform it into a $(\Delta+1)$-coloring.

We now explain \Cref{thm:mainSimple} in the case of $d>0$. 
A \emph{$\beta$-out degree $c$-coloring} is a vertex coloring with $c$ colors together with an orientation of the edges between neighbors with the same color such that each node has at most $\beta$ outgoing edges. Note that the edges between vertices with different colors do not need to be oriented.\footnote{These colorings with a bound on the outdegree are closely related to \emph{arbdefective} colorings which were introduced in \cite{barenboimE10} and have played a significant role in the development of sublinear in $\Delta$ algorithms (more details in \cite{MT20}).}
Consider the setting of $k=1$ and $d=\beta=\Delta^{\eps}$ for a constant $0<\eps<1$. Due to the first condition of \Cref{thm:mainSimple}, we obtain a simple $\beta$-out degree $O(\Delta/\beta)$-coloring algorithm that runs in $O(\Delta/\beta)$ rounds. Further, by assigning a vertex $v$ with output color $\phi(v)$ the color tuple $(\phi(v),i)$ where $i$ is the index of the subgraph $P_i$ that it belongs to in $(2)$, we obtain a $d$-defective $O((\Delta/d)^2)$-coloring in $O(\Delta/d)$ rounds. For $k=X=O(\Delta)$, the same defective coloring can be computed in one round. This simplifies and subsumes several results in the literature. 

A $\beta$-out degree $O(\Delta/\beta)$-coloring algorithm is one of the two crucial components in the state of the art $(\Delta+1)$-coloring algorithm in \cite{MT20}. Our simpler algorithm to compute such a coloring thus also simplifies the overall algorithm, see \Cref{ssec:simplified} for details. 

The next corollary summarizes the  parameter settings in \Cref{thm:mainSimple}, that are most interesting with our current knowledge. In the future, other settings of parameters might be of interest. 
\begin{restatable}{corollary}{corAllinOne}\label{cor:allInOne}
	There are the following deterministic CONGEST algorithms that compute the stated proper colorings in the stated runtimes on any $\Delta^4$-input colored graph with maximum degree $\Delta$,  given a  globally known parameter $k\geq 1$ (in 2. and 3.):
	\begin{enumerate}
		\item $256\Delta^2$-coloring in $1$ round.  (Linial's color reduction \cite{linial92})
		\item $16\Delta\cdot k$-coloring in $O(\Delta/k)$ rounds. (subsumes results in \cite{BE09,BarenboimEK14,Kuhn2009WeakColoring})
		\item $\Delta^2$-coloring in $O(1)$ rounds. 
		
	\end{enumerate}
There are the following deterministic CONGEST algorithms that compute the stated improper colorings in the stated runtimes on any $\Delta^4$-input colored graph, given a globally known parameter $\beta=\Delta^{\eps}$ (in 4.) or  $d=\Delta^{\eps}$ (in 5. and 6.) for a constant $0<\eps<1$:
	\begin{enumerate}[resume]
		\item $\beta$-out degree $O(\Delta/\beta)$-coloring in $O(\Delta/\beta)$ rounds (subsumes a result in \cite{BEG18})
		\item $d$-defective $O((\Delta/d)^2)$-coloring in $1$ round (subsumes a result in  \cite{Kuhn2009WeakColoring,BarenboimEK14})
		\item $d$-defective $O((\Delta/d)^2)$ coloring in $O(\Delta/d)$ rounds (subsumes some results in \cite{BE09,BarenboimEK14})
	\end{enumerate}
\end{restatable}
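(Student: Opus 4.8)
The plan is to read off all six items as instantiations of \Cref{thm:mainSimple}, each obtained by fixing the input color count to $m=\Delta^4$ and choosing the two free parameters $d$ (tolerated out-degree / defect) and $k$ (batch size) appropriately; items~5 and~6 additionally refine the output coloring by one extra coordinate using the partition from point~(2). First I would record a short sub-claim about the quantity $X=4\big(\Delta/(d+1)\big)\lceil\log_{\Delta/(d+1)}m\rceil$ from the theorem: for $d=0$ and $m=\Delta^4$ we have $\log_\Delta\Delta^4=4$, so $X=16\Delta$; and whenever $d+1\le\Delta^{1-\eps}$ for a fixed $\eps>0$ (the regime $d=\Delta^{1-\Omega(1)}$), $\log_{\Delta/(d+1)}\Delta^4\le 4/\eps=O(1)$, hence $X=\Theta_\eps\big(\Delta/(d+1)\big)=\Theta(\Delta/d)$. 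I would also note that every claim is vacuous for $\Delta$ below an absolute constant, so one may freely assume $\Delta$ is large enough to absorb the floors and ceilings below.

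For the proper colorings (items~1--3) set $d=0$, so $X=16\Delta$. Item~1 is the choice $k=X=16\Delta$: then $R=\lceil X/k\rceil=1$ and $C=X\cdot k=(16\Delta)^2=256\Delta^2$, which is a single Linial-style color-reduction step. Item~2 is the choice of any $1\le k\le 16\Delta$ (for larger $k$ there is nothing to prove, since $256\Delta^2$ colors already suffice): $R=\lceil 16\Delta/k\rceil=O(\Delta/k)$ rounds and $C=16\Delta k$ colors. Item~3 is item~2 with $k=\lfloor\Delta/16\rfloor$: then $C\le 16\Delta\cdot(\Delta/16)=\Delta^2$ and $R=\lceil 16\Delta/\lfloor\Delta/16\rfloor\rceil=O(1)$.

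For items~4--6 set $d=\beta=\Delta^{1-\Omega(1)}$, so $X=\Theta(\Delta/\beta)$ by the sub-claim. Item~4 uses $k=1$: the theorem outputs a proper $C=X=O(\Delta/\beta)$-coloring in $R=X=O(\Delta/\beta)$ rounds, and point~(1) equips each color class with an orientation of out-degree at most $d=\beta$, i.e.\ exactly a $\beta$-out-degree $O(\Delta/\beta)$-coloring. Item~5 uses $k=X$: then $R=1$ and $C=X^2$, which is $O\big((\Delta/d)^2\big)$ up to the $\lceil\log_{\Delta/(d+1)}m\rceil^2$ factor recorded in the statement; now point~(2) with $R=1$ says each color class is an induced subgraph of degree at most $d$, so this $C$-coloring is already $d$-defective. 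Item~6 uses $k=1$: the theorem outputs a proper $X$-coloring in $R=X=O(\Delta/d)$ rounds together with, inside each color class, a partition into $R$ induced subgraphs $P_1,\dots,P_R$ of degree at most $d$; recoloring a vertex $v$ of color $\phi(v)$ by the pair $(\phi(v),i)$ with $v\in P_i$ uses $X\cdot R=X^2=O\big((\Delta/d)^2\big)$ colors, and each new color class lies inside some $P_i$, hence has degree at most $d$. In every case the orientation and the partition (hence also the refined coloring) are produced within the stated round complexity by the last sentence of \Cref{thm:mainSimple}.

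I do not expect a genuine obstacle: the corollary is essentially a menu of parameter substitutions into \Cref{thm:mainSimple}. The only points needing (elementary) care are the book-keeping of the $\lceil\log_{\Delta/(d+1)}m\rceil$ term---being explicit about which regime of $d$ makes it $O(1)$---the rounding remarks above, and, for item~2, the harmless convention that the bound is claimed only for $k$ in the range where $16\Delta k\le 256\Delta^2$. Concretely I would state the sub-claim on $X$ first and then dispatch the six items in the two groups above, each in a line or two.
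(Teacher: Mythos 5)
Your proposal is correct and takes essentially the same approach as the paper: all six items are read off as parameter substitutions into \Cref{thm:mainSimple}, with $d=0$ and $k\in\{X,\,\text{free},\,\lceil\Delta/16\rceil\}$ for items 1--3, and $d=\beta=\Delta^{1-\Omega(1)}$ with $k\in\{1,X,1\}$ for items 4--6, plus the $(\phi(v),i)$ recoloring trick for item~6. Your small extra care (floor vs.\ ceiling in item~3, the $k>16\Delta$ convention in item~2, and isolating the sub-claim that $\lceil\log_{\Delta/(d+1)}\Delta^4\rceil=O(1)$ in the given regime) is a minor tidy-up of the same argument, not a different route.
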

 The required $\Delta^4$-input coloring can be computed with Linial's algorithm for a sufficiently large constant $\Delta$. The precise choice of the constants in the $O$-notation in the defective coloring in \Cref{cor:allInOne} depends linearly on the exponent $\eps$.
The algorithm of (6) is clearly inferior to the one in (5),  as it computes a $d$-defective coloring with the same number of colors but is slower. We merely state (6) for its proof (see \Cref{sec:mother}) which gives a slightly different perspective on \Cref{thm:mainSimple}.

 We also provide algorithms that are faster than the previous state of the art. 
 \begin{restatable}{theorem}{thmbetterColoring}\label{thm:fasterCol}
 	For any constant $\eps>0$, there is a deterministic \CONGEST algorithm that computes a $O(\Delta^{1+\eps})$-coloring in $O(\Delta^{1/2-\eps/2})+\logstar n$ rounds on any graph with maximum degree $\Delta$.
 \end{restatable}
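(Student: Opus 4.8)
The plan is to bootstrap \Cref{thm:mainSimple} with a single extra layer of degree reduction, in the spirit of the sublinear-in-$\Delta$ coloring algorithms of \cite{Barenboim16,FHK,BEG18,MT20}. Fix the constant $\eps>0$ and put $k:=\lceil\Delta^{\eps}\rceil$, so the target is an $O(k\Delta)=O(\Delta^{1+\eps})$-coloring in $O(\sqrt{\Delta/k})+\logstar n=O(\Delta^{1/2-\eps/2})+\logstar n$ rounds. First I would spend $\logstar n$ rounds running Linial's algorithm \cite{linial92} (equivalently, $O(1)$ applications of \Cref{cor:allInOne}(1)) to replace the input coloring by one with $m=\poly\Delta$ colors; after this every factor $\lceil\log_{\Delta/(d+1)}m\rceil$ appearing in \Cref{thm:mainSimple} is a constant in the regimes we use (we always take $d=\Delta^{1-\Omega(1)}$), and since $\polylog\Delta=O(\Delta^{\eps})$ we never have to worry about polylogarithmic slack in the final color count.

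The algorithm then has two phases. In Phase~1 I invoke \Cref{thm:mainSimple} with defect parameter $d=\beta$ (for a $\beta$ fixed below) and batch size $k'=1$, i.e.\ in $R=X=O(\Delta/\beta)$ rounds producing a proper coloring $\chi_1$ with $O(\Delta/\beta)$ colors such that, by property~(1), each color class of $\chi_1$ carries an orientation of its internal edges of outdegree at most $\beta$; in particular each class is $O(\beta)$-degenerate, and distinct classes are vertex-disjoint. In Phase~2, on every $\chi_1$-class in parallel, I run a color reduction that takes a graph of degeneracy $O(\beta)$, maximum degree at most $\Delta$, a $\poly\Delta$ input coloring, and the given bounded-outdegree orientation, and outputs a proper coloring with $O(k\beta)$ colors in $O(\beta/k)$ rounds. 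Combining the two phases into the product coloring $v\mapsto(\chi_1(v),\chi_2(v))$, where $\chi_2$ is the color of $v$ inside its $\chi_1$-class, gives a proper coloring of $G$: adjacent vertices are separated by the first coordinate if they lie in different $\chi_1$-classes and by the second otherwise. Its number of colors is $O(\Delta/\beta)\cdot O(k\beta)=O(k\Delta)=O(\Delta^{1+\eps})$---the factor $\beta$ cancels---and its round complexity is $\logstar n+O(\Delta/\beta)+O(\beta/k)$. Choosing $\beta:=\lceil\sqrt{\Delta k}\rceil$, which is $\Delta^{1-\Omega(1)}$ since $\eps<1$, balances the two phases so that each costs $O(\sqrt{\Delta/k})=O(\Delta^{1/2-\eps/2})$ rounds, yielding the claimed bound.

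The one step that is not a routine consequence of \Cref{thm:mainSimple}, and the reason the algorithm loses the simplicity of the mother algorithm, is the Phase~2 routine: a color reduction whose palette scales with the \emph{degeneracy} (equivalently, arboricity, or bounded outdegree) $\beta$ of the subgraph rather than with its maximum degree $\Delta$, and that still runs in $O(\beta/k)$ rounds. Applying \Cref{thm:mainSimple} verbatim to a class is useless, since its color bound depends on $\Delta$ and not on $\beta$; and a forest-decomposition-based approach would incur $\Omega(\log n)$ rounds and blow the budget. The way I would build this routine is to re-run the trial mechanism of \Cref{thm:mainSimple} \emph{inside} the class, using the already-available $\poly\Delta$-coloring in place of IDs and using the bounded-outdegree orientation to bound how many colors a vertex must avoid, thereby recovering (a simple version of) the arbdefective-coloring subroutines of \cite{Barenboim16,FHK,BEG18,MT20}; iterating this a constant number of times to drive the degeneracy parameter down, and then trivially coloring the resulting low-degeneracy residual, finishes a class. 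I expect the bulk of the technical work to be exactly here: pushing the palette of this degeneracy-sensitive reduction down to $O(k\beta)$ while spending only $O(\beta/k)$ rounds, and confirming that the polylogarithmic overheads are absorbed into the $\Delta^{\eps}$ budget. Everything else is the two-phase bookkeeping sketched above.
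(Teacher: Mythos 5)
There is a genuine gap, and it stems from a subtle but consequential difference in the choice of decomposition in Phase~1. You use the $\beta$-outdegree (arbdefective) variant of \Cref{thm:mainSimple}---property~(1) with $d=\beta$, $k'=1$---which is not a proper coloring (you call it proper while immediately referring to its internal edges) and, crucially, bounds only the \emph{degeneracy} of each $\chi_1$-class, not its maximum degree: a class with outdegree bounded by $\beta$ can still contain vertices of degree $\Delta$ (e.g.\ a star oriented toward the center). The paper instead takes a $d$-\emph{defective} $O((\Delta/d)^2)$-coloring (\Cref{cor:allInOne}~(6), obtained by combining properties~(1) and~(2) of \Cref{thm:mainSimple}), under which each class has \emph{maximum degree} at most $d$. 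This stronger structural guarantee is what makes Phase~2 trivial: one simply invokes \Cref{thm:BarenboimCongest} inside each class with $\Delta$ replaced by $d$, obtaining an $O(d)$-coloring in $O(\sqrt{d})$ rounds, and the product coloring has $O((\Delta/d)^2)\cdot O(d)=O(\Delta^2/d)=O(\Delta^{1+\eps})$ colors for $d=\Delta^{1-\eps}$.

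By contrast, your Phase~2 asks for a routine that properly colors a graph of degeneracy $O(\beta)$ but maximum degree up to $\Delta$ with $O(k\beta)$ colors in $O(\beta/k)+\logstar n$ rounds. For $\eps<1/3$ you have $k\beta=\Delta^{1/2+3\eps/2}\ll\Delta$, so this would be a sublogarithmic-round coloring whose palette scales with arboricity rather than degree---exactly the regime where the $\Omega(\log n/\log\beta)$ lower bound of Barenboim--Elkin \cite{BE10sublog} (which the paper cites precisely to explain why arboricity-based colorings cost $\Omega(\log n)$) rules out such algorithms. Your sketch of "re-running the trial mechanism of \Cref{thm:mainSimple} using the orientation to bound how many colors a vertex must avoid" does not escape this: the conflict count in the analysis of \Cref{thm:mainSimple} is a sum over \emph{all} up to $\Delta$ neighbors (in- and out-), and the recursion scheme of \cite{FHK,Barenboim16,BEG18,MT20} that exploits bounded outdegree ultimately produces $\Theta(\Delta)$ colors, not $\Theta(\beta)$ colors. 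So the key primitive you defer to "the bulk of the technical work" is not merely missing---in the small-$\eps$ regime it contradicts a known lower bound. The fix is to switch Phase~1 from an arbdefective to a defective decomposition, at which point Phase~2 becomes a black-box call to \Cref{thm:BarenboimCongest} and no new primitive is needed.
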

Initiating \Cref{thm:fasterCol} with  $\eps=\log_{\Delta} k$ yields the following corollary.
\begin{corollary}
For any $1\leq k\leq \Delta$, there is a deterministic \CONGEST algorithm that computes an $O(k\Delta)$-coloring in $O(\sqrt{\Delta/k})+\logstar n$ rounds on any graph with maximum degree $\Delta$.
\end{corollary}
\paragraph{Ruling Sets.}
 For an integer $r\geq 1$, a \emph{$(2,r)$}-ruling set of a graph $G=(V,E)$ is a subset $S\subseteq V$ of the vertices that is an independent set and satisfies that for any vertex $v\in V$ there is a vertex $s\in S$ in hop distance at most $r$ \cite{awerbuch89}. Ruling sets and their extensions (larger distance between nodes in $S$) have played an important role as subroutines in several algorithms, e.g., \cite{awerbuch89,panconesi1992improved,GHKM18,EM19}. We provide a faster algorithm for  $(2,r)$-ruling sets. 
 \begin{restatable}{theorem}{thmRulingSets}\label{thm:rulingSet}
 	For any constant integer $r\geq 2$, there is a deterministic \CONGEST algorithm that computes $(2,r)$-ruling in $O(\Delta^{\frac{2}{r+2}})+\logstar n$ rounds on any graph with maximum degree $\Delta$. 
 \end{restatable} 
  The fastest previous algorithm used $O(\Delta^{2/r})+\logstar n$ rounds \cite{SEW13}. So, e.g., for $r=2$ the $\Delta$-dependency improves from $O(\Delta)$ to $O(\sqrt{\Delta})$ and for $r=3$ it improves from $O(\Delta^{2/3})$ to $O(\Delta^{2/5})$. 
 For $r=1$ the problem is equivalent to the \emph{maximal independent set problem} and has a $\Omega(\Delta)$ lower bound, if the dependency on $n$ is limited to $O(\logstar n)$ rounds \cite{FOCS19MIS}. For (possibly non constant) $r\geq 1$ there is a very recent lower bound of $\Omega(r\Delta^{1/r})$ rounds for the problem, even if an initial $O(\Delta)$ coloring is given \cite{BSKO21}. Thus, the bound of \Cref{thm:rulingSet} is tight for $r=2$. 

\paragraph{Lower bounds for color reduction.}
We give tight characterization for $1$-round color reduction algorithms, given an $m$-input coloring and no unique IDs.  
\begin{restatable}{theorem}{oneRoundTight}
	\label{thm:oneRoundTight}
	For any integer $\Delta\geq 1$ and $\Delta+1\leq m\leq \frac{\Delta^2}{4}+\frac{3\Delta}{2}+ \frac{9}{4}$ let $1\leq k\leq \min\{\Delta-1,\frac{\Delta}{2}+\frac{3}{2}\}$ be the largest integer such that $m\geq k(\Delta-k+3)$. Then, there is a $1$-round \CONGEST algorithm that on any $m$-input colored graph with maximum degree $\Delta$ computes an $(m-k)$-coloring. Additionally, there is no $1$-round \LOCAL algorithm that outputs a proper $(m-k-1)$-coloring on every $m$-input colored graph with maximum degree $\Delta$.
\end{restatable}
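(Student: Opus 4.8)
The plan is to recast one-round color reduction as a purely combinatorial existence question about set families and then prove matching positive and negative statements whose threshold is exactly $k(\Delta-k+3)$.

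First I would set up a characterization. Since nodes carry no identifiers and the target palette $[m']$ has size $\poly\Delta$, a one-round (\LOCAL or \CONGEST) algorithm is without loss of generality a map $f$ sending a node's color $i\in[m]$ together with the \emph{set} $A\subseteq[m]\setminus\{i\}$ of colors it hears (with $|A|\le\Delta$) to an output $f(i,A)\in[m']$. Associating with $f$ the sets $F_i(j):=\{f(i,A):j\in A,\ i\notin A,\ |A|\le\Delta\}\subseteq[m']$, I claim $f$ is proper on all graphs of maximum degree $\le\Delta$ if and only if (a) $F_i(j)\cap F_j(i)=\emptyset$ for all $i\ne j$, and (b) $\bigcap_{j\in A}F_i(j)\ne\emptyset$ for all $i$ and all $A\subseteq[m]\setminus\{i\}$ with $1\le|A|\le\Delta$. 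The forward direction of (b) is immediate since $f(i,A)$ lies in each $F_i(j)$, $j\in A$; (a) follows by confronting a putative collision $f(i,A)=f(j,B)$ with the properly $m$-colored, degree-$\le\Delta$ graph consisting of one edge between colors $i$ and $j$ plus private pendants realizing the colors in $A$ and in $B$. Conversely, $f(i,A):=\min\bigcap_{j\in A}F_i(j)$ (and $f(i,\emptyset):=1$) is well-defined by (b) and proper by (a). Thus \Cref{thm:oneRoundTight} is equivalent to the statement that families obeying (a)--(b) over a ground set of size $m-\ell$ exist if and only if $m\ge\ell(\Delta-\ell+3)$, the parameter ranges being exactly what keeps $\ell$ on the increasing branch of $\ell\mapsto\ell(\Delta-\ell+3)$.

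For the reduction (the ``if'' direction, $\ell=k$) I keep the palette $[m']=[m-k]$ and call $\{m-k+1,\dots,m\}$ the surplus colors $b_1,\dots,b_k$. Palette-colored nodes keep their color, which is safe because a surplus node always avoids its palette-colored neighbors' colors. The crucial point is that a surplus node $v$ of color $b_s$ whose neighborhood contains $t$ surplus colors has at most $\Delta-t$ palette-colored neighbors, so it must dodge only $\Delta-t$ palette colors, while its at most $t$ surplus neighbors merely need \emph{distinct} values; since $k$ surplus colors can be mutually adjacent, this forces $k$ disjoint reservoirs of size $\Delta-k+2$, i.e.\ $k(\Delta-k+2)\le m'$, i.e.\ $m\ge k(\Delta-k+3)$. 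Concretely I would split the first $k(\Delta-k+2)$ palette colors into blocks $B_1,\dots,B_k$ of size $\Delta-k+2$ and let a surplus node of color $b_s$ that sees surplus colors $\{b_{s'}:s'\in S\}$ pick, from $\bigcup_{s'\notin S}B_{s'}$ (all of $[m']$ when $S=\emptyset$), the first available color in a block-scan order chosen so that two adjacent surplus nodes which both reach into a common block nevertheless land on different colors. Existence of such a color follows from the inequality $(k-t)(\Delta-k+2)\ge\Delta-t+1$ for $1\le t\le k-1$ together with $m'\ge\Delta+1$ (both valid throughout the stated ranges), and the scan-order choice kills the remaining surplus--surplus edges, giving a proper $(m-k)$-coloring in one round and in \CONGEST.

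For impossibility of saving $k+1$ colors, suppose $f$ reduces to $m'=m-(k+1)$ colors; the characterization yields families $F_i(j)\subseteq[m']$ obeying (a)--(b) with $\ell=k+1$. Dualizing, put $U_i^{c}=\{j\ne i:c\in F_i(j)\}$; then (a) makes each $c\in[m']$ an oriented graph on $[m]$ (never both $i\to j$ and $j\to i$), while (b) says that for each vertex $i$ the $m'$ out-neighborhoods $U_i^{c}$ cover all $\Delta$-subsets of $[m]\setminus\{i\}$. Counting edges over the $m'$ oriented graphs against this covering demand shows that only $\le m'$ vertices can afford a ``cheap'' near-universal out-neighborhood, so the remaining $\ge m-m'=k+1$ vertices must each realize their covering through $\ge k+1$ essentially disjoint large sets, and pushing the sizes through (a) gives $m\ge(k+1)(\Delta-k+2)$, contradicting the maximality of $k$. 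Equivalently one may exhibit the extremal graph directly: a clique on $k+1$ vertices colored $m-k,\dots,m$, each padded with $\Delta-k$ private neighbors whose colors an adversary fills in to collapse two of the clique's outputs, the same count showing $m<(k+1)(\Delta-k+2)$ forces a monochromatic edge. The routine parts here are the characterization and the ``off-by-a-constant'' estimates; the real difficulty is matching the \emph{exact} parabola $\ell(\Delta-\ell+3)$ on both ends — on the construction side, reconciling surplus nodes with few surplus neighbors (large reservoirs spilling across blocks) with those having many (small reservoirs); on the impossibility side, showing the orientation-plus-covering count is tight rather than losing an additive $\binom{\ell}{2}$-type term. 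I expect the latter extremal set-theoretic estimate to be the main obstacle.
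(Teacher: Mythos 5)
Your set-family characterization is sound, and it is in fact a cleaner statement of what the paper uses implicitly: a color $c$ that is ``near-universal'' for $i$ (your $U_i^c=[m]\setminus\{i\}$, i.e.\ $c\in\bigcap_{j\neq i}F_i(j)$) is exactly the paper's ``strictly hardcoded'' input color, and your antisymmetry observation that each $c$ is near-universal for at most one $i$ is precisely why at most $m'$ input colors can be hardcoded. So the skeleton matches the paper's two-lemma split (explicit one-round reduction, plus a neighborhood with no legal output). However, both halves of your argument have a genuine gap, and it sits exactly at the point you yourself flag.

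\textbf{Upper bound.} The $k$ blocks of size $\Delta-k+2$ and the principle ``a surplus node draws only from blocks of surplus colors it does not see'' are the paper's construction, and the inequality $(k-t)(\Delta-k+2)\geq\Delta-t+1$ correctly gives a surplus node room to dodge its $\leq\Delta-t$ palette-colored neighbors. But the ``block-scan order chosen so that two adjacent surplus nodes which both reach into a common block nevertheless land on different colors'' does not exist, and cannot be replaced by any ordering. If $v$ (color $b_s$) and $w$ (color $b_{s'}$) are adjacent and neither sees a third surplus color $b_j$, then both pools contain all of $B_j$; in one round $v$ has no information about which entries of $B_j$ are blocked for $w$, so no deterministic scan can stop $v$ and $w$ from landing on the same entry. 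The paper avoids this by \emph{not} giving either node the whole foreign block: a surplus node of relabelled index $i$ is allotted only the single slot $R_j(i)$ (if $j>i$) or $R_j(i-1)$ (if $j<i$) from each foreign block $R_j$, and a short index-shift check shows these borrowed slots are pairwise distinct across surplus colors. This makes $F(v)\cap F(w)=\emptyset$ hold structurally for any two adjacent recoloring nodes, and then $|F(v)|\geq d(v)+1$ only needs to beat the non-recoloring neighbors. Without an identity-keyed slot allocation of this kind your reservoirs overlap and the construction does not close.

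\textbf{Lower bound.} Your primary route (``counting edges over the $m'$ oriented graphs \ldots pushing the sizes through (a)'') is a gesture rather than a proof, and you correctly suspect that a naive orientation-plus-covering count loses an additive $\binom{\ell}{2}$-type term and will not reproduce the exact parabola. Your alternative — a clique on $k+1$ non-hardcoded colors padded with $\Delta-k$ adversarial pendants — is the right picture, but the target should not be to ``collapse two of the clique's outputs.'' What closes the argument is a candidate-set disjointness step: for each non-hardcoded $x$ in the clique, let $C_x\subseteq[m']$ be the set of outputs the algorithm can still produce when $x$ sees the other $k$ clique colors; condition (a) makes the $C_x$'s pairwise disjoint, so by pigeonhole some $x_*$ has $|C_{x_*}|\leq\lfloor m'/(k+1)\rfloor\leq\Delta-k$, and because $x_*$ is non-hardcoded every $c\in C_{x_*}$ is killed by a single blocking color $\phi_c$. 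Attaching these $\leq\Delta-k$ pendants to $x_*$ inside the clique stays within degree $\Delta$ and leaves $x_*$ no legal output. Your sketch has the ingredients (non-hardcoded colors, pigeonhole, adversarial pendants) but omits the disjointness of the candidate sets, which is what makes the pigeonhole land exactly at $\Delta-k$; without it the contradiction is not actually derived.
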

\Cref{thm:oneRoundTight} roughly states that reducing $k$ colors requires $k\Delta-\Theta(k^2)$ input colors. For concrete choices of $k$ the bound in \Cref{thm:oneRoundTight} says that to reduce $1$ color one needs at least $\Delta+2$ input colors, to reduce $2$ colors one needs $2\Delta+2$ input colors, to reduce $3$ colors one needs $3\Delta$ input colors, and to reduce $4$ colors one needs $4\Delta-4$ input colors, and so on \ldots~. 

The fastest randomized algorithms compute $O(\Delta)$ colorings in $O(\logstar n)$ rounds for $\Delta\geq \polylog n$ \cite{SW10,CLP20} and they can be adapted to also compute $(1+\eps)\Delta$-colorings. However, it seems that the hardest part of $(\Delta+1)$-coloring is to reduce a $(1+\eps)\Delta$ coloring to a $(\Delta+1)$-coloring. We show, that  an algorithm with runtime $T$ that reduces an input coloring with $(1+\eps)\Delta$ colors to a $(\Delta+1)$ coloring can be used with $O(\log_{1+\eps}\Delta)$ overhead to reduce a $O(\Delta^2)$-coloring to a $(\Delta+1)$-coloring. If $T=\Delta^{\Omega(1)}$, there is only a constant factor overhead (for details see \Cref{sec:conclusion}).

\subsection{Related Work}
The state of the art for $(\Delta+1)$-coloring when the runtime is expressed as $f(\Delta)+\logstar n$ is $O(\sqrt{\Delta\log \Delta})+\logstar n$ rounds and given by \cite{MT20}. Just, as the slightly slower algorithms \cite{FHK,Barenboim16,BEG18} the result of \cite{MT20} works for the more general $(deg+1)$-list coloring problem in which the size of the list of each node exceeds its degree. The result in \cite{FHK}  applies for the even more general \emph{local conflict coloring problem} in which one can specify for each edge of the graph which colors are not allowed to be adjacent. For an extensive overview on algorithms whose runtime is $f(\Delta)+\logstar n$  as well as an overview on the influence of arbdefective colorings during the last decade we refer to the related work section in \cite{MT20}. Further, almost all published papers until 2013 are discussed in the excellent monograph Barenboim and Elkin \cite{barenboimelkin_book}, and another very detailed overview on more recent results on coloring  is contained in \cite{Kuhn20}. Detailed overviews on randomized algorithms are contained in \cite{CLP20,HKMT21}.
Due to the sheer amount of published papers on distributed coloring we focus on selected results that have not been discussed in detail in \cite{barenboimelkin_book,Kuhn20,MT20,CLP20,HKMT21},  are most related to the current work, or indicate in which direction future research should continue, or should probably not continue. 

The objective in the edge coloring problem is to assign a color to each edge of a graph such that adjacent edges obtain different colors. Even though Vizing's Theorem \cite{vizing1964} states that any graph with maximum degree $\Delta$ can be colored with $\Delta+1$ colors and several randomized and deterministic  algorithms get close to this bound, e.g., \cite{GKMU17,HarrisEdge19,SuVu19,Bernshteyn22} in \LOCAL   and \cite{HMN22} in \CONGEST,  the classic objective is to use $2\Delta-1$ colors. The reason being that $(2\Delta-1)$-edge coloring is a $(\text{Maxdegree}+1)$-vertex coloring of the line graph.  The problem has a $f(\Delta)+\logstar n$ \LOCAL algorithm with $f(\Delta)=\poly\log \Delta$ \cite{BKO22a}. Obtaining such a runtime for computing a $(\Delta+1)$-vertex coloring would be a major breakthrough. In the \CONGEST model, the state of the art for $(2\Delta-1)$-edge coloring is an algorithm using $O(\Delta+\logstar n)$ round deterministic algorithm \cite{BEG18}.  An edge coloring with $O(\Delta)$ colors can be computed in the \CONGEST model with $f(\Delta)=\poly\log \Delta$ \cite{BKO22a}.

Besides optimizing the dependency on $\Delta$ after spending only $\logstar n$ rounds on Linial's coloring algorithm, another big branch of research has tried to settle the complexity of the problem as a function of $n$. For almost thirty years the best deterministic algorithm in this regime was a $2^{O(\sqrt{\log n})}\gg \poly\log n$ round algorithm \cite{awerbuch89,panconesi1992improved}, that has been improved to $O(\log^5 n)$ rounds \cite{RG19,GGR20} in the \LOCAL model  and in the \CONGEST model \cite{BKM19, GGR20} (slightly slower). The crucial building block of all of these results is a decomposition of the graph into $O(\log n)$ classes $\mathcal{C}_1,\ldots,\mathcal{C}_{O(\log n)}$ of small diameter clusters. To solve the $(\Delta+1)$-coloring problem one iterates through the $O(\log n)$ classes and solves each cluster $C\in \mathcal{C}_i$ in parallel in time that is (at least) linear in the cluster diameter. Even existentially, such decompositions require that the cluster diameter is at least $\Omega(\log n)$ and as a result these methods can probably not yield runtimes that are $o(\log^2 n)$. Thus, the fastest algorithm \cite{GK20} that needs $O(\log n\log^2\Delta)$ rounds uses a different approach: Similar to \cite{BKM19} it derandomizes a simple randomized $1$-round algorithm. Output colors are represented as bit strings of length $O(\log \Delta)$ and in one round of the algorithm each node  flips a (suitably weighted) coin to determine the next bit in the string. In expectation, after all $O(\log \Delta)$ bits are fixed a constant fraction of the vertices can be colored. Bamberger, Kuhn and Maus derandomize this algorithm for each cluster of a given network decomposition \cite{BKM19}. In contrast, instead of computing a network decomposition and derandomizing within a cluster, Ghaffari and Kuhn  derandomize the algorithm \emph{globally} with the help of a special kind of a defective coloring \cite{GK20}. Their derandomization step takes $O(\log \Delta)$ rounds for each of the $O(\log \Delta)$ bits and the $O(\log n)$ factor follows as only a constant fraction of the vertices get colored in each phase, yielding a total runtime of $O(\log^2\Delta\log n)$ rounds. Similar methods, also yielding $\log n \cdot \poly\log \Delta$ runtimes, have been successful for edge-coloring \cite{FOCS18-derand,HarrisEdge19} and computing maximal matchings \cite{F20}. 
If one allows $O(\Delta^{1+\eps})$ colors for a constant $\eps>0$ a $O(\log \Delta\log n)$ round algorithm has been known for more than a decade \cite{barenboimE10}.

As shown in \cite{brandt2016LLL,chang16exponential} a logarithmic dependency on $n$ ($\log \log n$-dependency for randomized algorithms) is unavoidable if one colors with fewer than $\Delta+1$ colors, that is, $\Delta$-coloring requires at least $\Omega(\log n)$ rounds. Similar bounds hold for the edge coloring problem for coloring with fewer than $(2\Delta-1)$-colors \cite{BHKOS19} and for coloring trees and bounded arboricity graphs with significantly fewer than $\Delta$ colors \cite{linial92,BE10sublog}. 

Little is known on lower bounds for $C$-coloring when $C\geq \Delta+1$ (in contrast to other symmetry breaking problems, e.g., maximal matching, MIS or ruling sets \cite{kuhn16_jacm,FOCS19MIS,BBO20}). 
Linial's $\Omega(\logstar n)$ deterministic lower bound has recently been re-proven in a  topological framework \cite{fraigniaud2020topology}. 
A $\Omega(\Delta^{1/3})$ lower bound for $O(\Delta)$-coloring holds in a weak variant of the LOCAL model  \cite{disc16_coloring}. 
Several papers 
analyzed special cases of coloring algorithms which can only spend a single communication round \cite{SzegedyV93,KuhnW06,disc16_coloring}. Just, as the lower bounds in this paper, none of these results gives anything non-trivial for two rounds. 
Also, the \emph{speedup} technique (e.g.,  \cite{Brandt19speedup,brandt2016LLL,FOCS19MIS,br2020truly,BBO20,balliu2019classification}), which proved very successful, e.g., for MIS and ruling set lower bounds, is not yet helpful for graph coloring. To make full use of the technique, one uses a computer program \cite{O20} to automatically transfer a problem $P_0$, e.g., the $(\Delta+1)$-coloring problem, into a problem $P_1$ that requires exactly one communication round less in the \LOCAL model. Then, one iterates the process to obtain problems $P_0,P_1,\ldots,P_t$, and if $P_t$ cannot be solved with a $0$-round algorithm, problem $P_0$ has a lower bound of $t$ rounds. Usually, the program is applied for small values of $\Delta$, and in a second step, the gained insights are transferred into a formal proof for general $\Delta$. For graph coloring the description of the problems grows so quickly with $t$ that even for small values of $\Delta$ one cannot even compute $P_1$ with current computers. 

There has also been a lot of progress in randomized coloring algorithm, e.g,   \cite{BEPSv3, HSinSu18, CLP20, HKMT21} where the state of the art for $(\Delta+1)$-vertex coloring is a $O(\log^3\log n)$ algorithm in the \LOCAL model \cite{CLP20,GK20} and $O(\log^6\log n)$ in the \CONGEST model \cite{HKMT21}.
Remarkably, there is a randomized $O(\logstar \Delta)$ round algorithm to compute a coloring with $\Delta+\log^{\gamma} n$ colors for a large enough constant $\gamma>0$  \cite{CLP20}. Prior to this Schneider and Wattenhofer \cite{SW10} showed that one can compute a $O(\Delta + \log^{1.1} n)$-coloring in $O(\log^* \Delta)$ rounds of \LOCAL.  Very recently, Halld\'orsson and Nolin showed that these results can be extended to the \CONGEST model \cite{HN21}. All of these latter randomized algorithm make use of the concept of trying several colors in one round, similar to our algorithm for $k>1$. 
In 2021, Halld\'orsson, Kuhn, Nolin, and Tonoyan \cite{HKNT21} showed that $(deg+1)$-list coloring can be solved in $\poly\log\log n$ rounds in the randomized \LOCAL model.  In 2022,  Halld\'orsson,  Nolin, and Tonoyan obtained similar result in the \CONGEST model \cite{HNT22}. These randomized algorithms run in $O(\logstar n)$ time if (additionally) it is guaranteed that each list is of size at least $\log^{2+\Omega(1)}n$), or $\Omega(\log^7 n)$, respectively, otherwise they run in $O(\log^3 n)$ \LOCAL rounds.
While Naor extended Linial's $\Omega(\logstar n)$ lower bound to randomized algorithms \cite{Naor91} (on rings with $\Delta=2$) it is not known whether the bounds for $O(\Delta)$-coloring for large $\Delta$ are tight. When $\Delta\geq \poly\log n$ holds,  our current knowledge does not rule out $O(1)$-round algorithms for $O(\Delta)$-coloring. This question is even more of interest as in this setting a $\poly \Delta$-coloring can be computed in one round from unique IDs from a space of size $\poly n$.

Additionally, we want to point out that, independently from this work, Barenboim, Elkin, and Goldenberg have extended their clever algorithm \cite{BEG18} to compute $O(k\Delta)$-colorings in $O(\Delta/k+\logstar n)$ rounds. These results appear in the journal version \cite{BEG-journal22} officially scheduled for publication next year.

\subsection{Roadmap}
In \Cref{sec:mother} we present the $O(\Delta/k)$-round $O(k\Delta)$-coloring algorithm, its implications and modifications to compute defective and out degree colorings. In \Cref{sec:applications} we explain how this simplifies the state of the art for $(\Delta+1)$-vertex coloring, our $O(k\Delta)$-coloring in $O(\sqrt{\Delta/k})$-rounds, and our results on computing ruling sets.
In \Cref{sec:oneRound} we analyze $1$-round color reduction algorithms. 
In \Cref{sec:conclusion} we conclude and explain why reducing a $(1+\eps)\Delta$-input coloring to a $(\Delta+1)$-coloring might be the hardest part of the $(\Delta+1)$-coloring problem. 

\section{Main Algorithm: Coloring Made Easy}
\label{sec:mother}
The objective of this section is to prove \Cref{thm:mainSimple} where the emphasis is on the fact that the algorithm is extremely simple if one ignores the precise choice of parameters. 
Before we prove \Cref{thm:mainSimple}, we prove \Cref{cor:allInOne} with useful settings of the respective parameters in \Cref{thm:mainSimple}; while \Cref{thm:mainSimple} is the technical result, the corollary is supposed  to be the framework to the outer world. To formally state the corollary, we begin with two definitions.

\begin{proof}[Proof of \Cref{cor:allInOne}]
	In each of the results we apply \Cref{thm:mainSimple} with different parameters. 
\textbf{Proof of (1).}  Choose $d=0$ which implies $X=16\Delta$. With $k=X$ we obtain a proper $C=X\cdot k=256\Delta^2$-coloring in one ($R=X/k=1$) round. 
\textbf{Proof of (2).} Choose $d=0$, which implies $X=16\Delta$. Then we obtain a $16\Delta\cdot k$ coloring in $R=16\Delta/k$ rounds. 
\textbf{Proof of (3).} Choose all parameters as in 2., but set $k=\lceil\Delta/16\rceil$, which implies $\Delta^2$ colors in $R=16\Delta/k=O(1)$ rounds. 

\smallskip 

In (4)--(6). the condition $d=\beta=\Delta^{\eps}$ implies \begin{align*}
X=O(\Delta/(\beta+1)\cdot \log_{\Delta/(\beta+1)}\Delta^4)=O(\Delta/\beta).
\end{align*}

\textbf{Proof of (4).} Let $k=1$ which implies the claimed number of colors ($X\cdot k=O(\Delta/(\beta +1)$)) and the claimed round complexity $(R=O(\Delta/\beta)$. The coloring is a $\beta$-outdegree coloring due to part (1) of \Cref{thm:mainSimple}. 
\textbf{Proof of (5).} With $k=X$ the runtime is $R=X/k=1$ rounds and we obtain $C=X\cdot k=O((\Delta/d)^2))$ colors. \Cref{thm:mainSimple} says that the coloring has defect at most $d$ as there is only one subgraph ($R=1$). 
\textbf{Proof of (6).} Choose $k=1$. 
	Let $P_1,\ldots, P_R$ be the partition of part $(2)$. If vertices consider their color and the index of their part of the partition as a color tuple, i.e., if a vertex $v\in P_j$ with color $\phi(v)$ colors itself with color $(\phi(v),j)$ we obtain a $d$-defective coloring with $O(\big(\frac{\Delta}{d}\big)^2)$ colors in $O(\Delta/d)$ rounds. 
\end{proof}
\Cref{cor:allInOne} shows that one algorithm is sufficient for many of the essential steps of several previous important papers,  and it further allows to smoothly scale between these results. Additionally, the algorithm for computing $\beta$-outdegree colorings is simpler and more direct than previous algorithms. The algorithm in \cite{BEG18} first needs to computes a certain defective coloring and only then can proceed to compute a low outdegree coloring. The slower algorithm in \cite{Barenboim16} uses a more involved recursive approach. The algorithm(s) in \cite{BE10sublog,BE11} are more involved and require $\Omega(\log n)$ rounds. 

We continue with explaining the algebraic basics to construct the sequences for the algorithm for \Cref{thm:mainSimple} (Algorithm \ref{alg:mother}). 
Given a prime $q$ let $\FF_q$ denote the field of size $q$ over the elements $[q]=\{0,\ldots,q-1\}$ and let 
\begin{align*}
	P_q^f=\{p:\FF_q\rightarrow \FF_q \mid \text{ $p$ is polynomial of degree $\leq f$}\}
\end{align*} 
be the set of all polynomials over $\FF_q$ of degree at most $f$. Recall that $Z=\Delta/(d+1)$. 
To run the algorithm on a graph with maximum degree $\Delta$, an input $m$-coloring and a \emph{defect parameter} $d$ fix  $f=\lceil \log_{Z} m\rceil $ and a prime $q$ with 
\begin{align}
	\label{eqn:choiceOfQ}
	2f\cdot Z<q<4f\cdot Z,
\end{align}
which exists due to Bertrand's postulate.  Then we can locally and without communication assign each input color $i\in [m]$ a distinct polynomial $p_i\in P_q^f$  as $m\leq |P_q^f|=q^{f+1}$ and since all vertices know $m$ and $f$. For example, we can represent each element $p(x)=\sum_{i=0}^f a_i x^i $ of $P_q^f$ as a tuple $(a_0,\ldots,a_f)$, order the tuples lexicographically and assign the polynomial corresponding to the $i$-th tuple with input color $i$.  Given these polynomials, uncolored nodes try the colors in their sequences in batches of size $k$ ($k$ is an integer parameter that can be freely chosen). To \emph{try a color} it is sent to its neighbors.  A node gets permanently colored with a color $c$ or \emph{adopts} color $c$ if it causes conflicts with at most $d$ neighbors. To this end call a color $c$ \emph{$d$-proper} in some iteration if at most $d$ neighbors try color $c$ in the same iteration or are already permanently colored with color $c$.  The details are given by Algorithm~\ref{alg:mother} and the following paragraph.

\RestyleAlgo{boxruled}
\begin{algorithm} \caption{For vertex with color $i$. Parameters $d, k, m, \Delta$. (\LOCAL model)} 
	\label{alg:mother}
	\textbf{Locally compute:}\\
	\SetInd{0.1em}{2em}
	\quad polynomial $p_i:\mathbb{F}_q\rightarrow \mathbb{F}_q$ with $q$ chosen by (\ref{eqn:choiceOfQ}) \\
	\quad  sequence $s_i$: $\big(x\mod k, p_i(x) \mod q \big)$,~~$x=0,\ldots,q-1$\\
	\quad  Split $s_i$ into $\big\lceil \nicefrac{q}{k}\big\rceil$ consecutive batches $B_1,B_2, \ldots$, each of size $k$ (except for the last one)\\
	\textbf{For } $j=1,\ldots, \big\lceil \nicefrac{q}{k}\big\rceil$ \\
	\quad	\emph{Try} the colors in batch $B_j$ (in a single round)\\
	\quad	\lIf{ $\exists$ \big($d$-proper $c\in B_j$\big)}
	{adopt $c$, join $P_j$, and \Return } 
\end{algorithm}

Sending the colors of one batch takes one round of communication in the \LOCAL model. We reason at the end of the proof of \Cref{thm:mainSimple} that processing one batch also can be done in a single round of \CONGEST.  
If $k$ does not divide $q$, a node that is uncolored before the last iteration tries less than $k$ tuples in the last iteration, i.e., $|B_{\lceil q/k\rceil}|<k$. In fact, it will try $q-k\lfloor q/k\rfloor$ tuples. When a vertex picks a tuple as its permanent color $c$, it orients all edges towards neighbors that have previously chosen $c$ as a permanent color. If two neighbors both pick the same permanent color $c$ in the same iteration, the edge between them is oriented arbitrarily (e.g., using the input coloring for symmetry breaking from smaller input color to larger input color). A node joins the subgraph $P_j$ where $j$ is the index of the iteration in which it decides for a permanent color.  

Note that vertices with the same input color compute the same sequence and that all steps that are related to orientations and subgraphs are obsolete when $d=0$. 
We will show that the algorithm is well defined, i.e., that every vertex is colored before it reaches the end of its sequence. To prove the result, we need the following well known algebraic result on the number of intersections of two degree bounded polynomials over finite fields. 
\begin{lemma}
	\label{lem:polyIntersect}
	Let $q$ be a prime, $f\in \NN_0$ and let $p_1,p_2\in P_q^f$ distinct polynomials of degree $f_1, f_2$, respectively. Then there are at most $\max\{f_1,f_2\}$ points in which $p_1$ and $p_2$ intersect, i.e., $|\{x\in \FF_q \mid p_1(x)=p_2(x)\}|\leq \max\{f_1,f_2\}$~. 
\end{lemma}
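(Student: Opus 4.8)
The statement is a standard consequence of the fact that a nonzero univariate polynomial over a field has at most as many roots as its degree, so the proof I would give is short. The plan is to pass from the pair $(p_1,p_2)$ to the single \emph{difference polynomial} $g := p_1 - p_2 \in \FF_q[x]$, observe that its set of roots is exactly the intersection set in question, and then bound the number of roots by the degree of $g$.

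\textbf{Step 1: Reduce to counting roots.} For $x\in\FF_q$ we have $p_1(x)=p_2(x)$ if and only if $g(x)=0$. Hence $\{x\in\FF_q \mid p_1(x)=p_2(x)\}$ is precisely the set of roots of $g$ in $\FF_q$. Since $p_1\neq p_2$, the polynomial $g$ is not identically zero; and since $\deg p_1 = f_1$ and $\deg p_2 = f_2$, additivity/subadditivity of degree gives $\deg g \le \max\{f_1,f_2\}$ (the degree can only drop if $f_1=f_2$ and the leading terms cancel, which only strengthens the bound).

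\textbf{Step 2: A nonzero degree-$e$ polynomial over a field has at most $e$ roots.} Here I would use that $q$ is prime, so $\FF_q$ is a field and $\FF_q[x]$ is an integral domain. I would prove the claim by induction on $e=\deg g$. For $e=0$, $g$ is a nonzero constant and has no roots. For $e\ge 1$, if $g$ has no root we are done; otherwise pick a root $a\in\FF_q$, and by the factor theorem (polynomial division of $g$ by $x-a$, whose remainder is $g(a)=0$) write $g(x)=(x-a)h(x)$ with $\deg h = e-1$. If $b\neq a$ is any other root of $g$, then $0=g(b)=(b-a)h(b)$, and since $b-a\neq 0$ and $\FF_q$ has no zero divisors, $h(b)=0$. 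Thus every root of $g$ other than $a$ is a root of $h$, and by the induction hypothesis $h$ has at most $e-1$ roots, so $g$ has at most $e$ roots.

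\textbf{Conclusion.} Combining the two steps, $g$ has at most $\deg g \le \max\{f_1,f_2\}$ roots in $\FF_q$, which is exactly the bound on $|\{x\in\FF_q \mid p_1(x)=p_2(x)\}|$ claimed. There is essentially no serious obstacle here; the only point that genuinely uses the hypothesis that $q$ is prime (rather than an arbitrary modulus) is Step 2, where the absence of zero divisors in $\FF_q[x]$ is what makes the root count equal the degree — over a non-field ring such as $\ZZ/q\ZZ$ with $q$ composite this can fail.
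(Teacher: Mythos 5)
Your proof is correct and is the standard argument: reduce to the root count of the difference polynomial $g=p_1-p_2$ and invoke that a nonzero polynomial over a field has at most $\deg g$ roots. The paper itself does not supply a proof of this lemma — it is explicitly introduced as a ``well known algebraic result'' and used as a black box — so there is no paper argument to compare against; your write-up fills that gap in exactly the way one would expect.
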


\begin{proof}[Proof of \Cref{thm:mainSimple}] Recall, that the prime number $q$ is the size  of the field $\FF_q$ from which the coefficients of the polynomials are taken. Due to the choice of $q$ and $f$ (see \Cref{eqn:choiceOfQ}) the set $P_q^f$ contains at least one distinct polynomial for each input color $i\in[m]$. Recall, the choice of parameters: $Z=\frac{\Delta}{d+1}$, $X=4\cdot Z\cdot\lceil\log_{Z}m\rceil$, and $R=\Delta/k$. Let $C=X\cdot k$ and note that $X\geq q$. 	We first prove all statements under the assumption that all vertices are colored after the $R$ iterations of the  loop; afterwards we show that this is indeed the case.
	
	\textbf{Bounding $\#$ colors: }Each color is of the form $(x \mod k, p(x)\mod q)$ for some polynomial $p$ that is evaluated over $\FF_q$. Thus the number of colors $C$ can be upper bounded by $k\cdot q\leq k\cdot X$.	
	
	\textbf{Proof of $(1)$:} When a vertex $v$ is colored with a color $\phi(v)$ in iteration $j$ there are at most $d$ other vertices that try to get the same color in this round or are already colored with this color. Since only edges to these vertices are oriented outwards from $v$ the outdegree of $v$ is bounded by $d$. Further, each edge between vertices with the same permanent color is oriented:  Either they picked the color in the same iteration and the edge is oriented from the node with the smaller input color to the vertex with larger input color, or the edge is oriented outwards from the vertex that got permanently colored in a later iteration. 
	
	\textbf{Proof of $(2)$:} A vertex $v$ joins $P_j$ if it is colored in iteration $j$ of the loop. As a vertex $v$ only gets colored with a color $\phi$ in iteration $j$ if there are at most $d$ neighbors of $v$ that want to get colored with color $\phi$ in iteration $j$ the maximum degree of the graph induced by all vertices in $P_j$ with color $\phi$ is at most $d$. 
	
	\textbf{All vertices are colored after the $R$ iterations of the  loop:}
	A node is not colored in one iteration only if for all of the $k$ tuples, i.e., colors of the form $(x,p_i(x))\in [k]\times [q]$, that it tries  in that iteration there are strictly more than $d$ neighbors that try the same tuple in the current iteration or are already colored with the tuple. Before we conclude with the proof that all vertices are colored at the end, we bound the number of conflicts that a node experiences during the execution of the algorithm. We consider two types of conflict. 
	
	\textit{Bounding the number of conflicts with an active node by $f$:} Let us bound the number of times in which two neighbors $u$, $v$ with polynomials $p_u$ and $p_v$, respectively,  try the same tuple in some iteration $j\in [\lceil q/k\rceil]$ (conditioned on both nodes not being permanently colored yet). In iteration $j$, the nodes simultaneously try all of the following tuples (where we omit the $j\cdot k\mod k=0$ term in the first coordinate). 
	\begin{align}
		\text{$u$ tries: } & \big(l , p_u(j\cdot k+l)\big) \text{ with $l\in [k]$} \text{ and}\\
		\text{$v$ tries: }& \big( l, p_v(j\cdot k+l)\big)\text{ with $l\in [k]$.} 
	\end{align} 
Two tuples tried by $u$ and $v$ in iteration $j$ can only cause a conflict, i.e, be the same, if they are the same in both coordinates. As all $k$ tuples that are simultaneously tried by a node differ in the first coordinate,  any conflict in iteration $j$ between $u$ and $v$ implies that $p_u(j\cdot k +l)=p_v(j\cdot k+l)$ holds for some $l\in [k]$.  Since $p_u$ and $p_v$ are polynomials of degree at most $f$, \Cref{lem:polyIntersect} implies that there are at most  at most $f$ combinations of $j$ and $l$ for which this holds.
	
	\textit{Bounding the number of conflicts with an inactive node by $f$:} Consider a neighbor $u$ that chose some permanent color $\big(x_u,y_u\big)\in[k]\times[q]$. For $v$ to try this tuple in iteration $j\in [q/k]$ we need \begin{align*}
		\big((j\cdot k +l)\mod k,p_v(j\cdot k+l)\mod q\big)=\big(x_u,y_u\big)
	\end{align*}
 for some $l\in [k]$. This can only be the case if $p_v(j\cdot k+l)$ equals the fixed number $y_u$, which is the case for at most $f$ different choices of $j$ and $l$ due to \Cref{lem:polyIntersect} ($y_u$ is a polynomial of degree $0$).
	
	\smallskip
	
	Thus, for fixed $u$ and $v$, there are at most $f$ tuples causing a \emph{conflict} while $u$ and $v$ are active, and at most $f$ tuples causing a \emph{conflict} after (at least) one of the nodes has chosen a permanent color. 
	A node $v$ cannot get permanently colored with a tuple $\big(l,p_v(j\cdot k+l)\big)$ if there are  strictly more than $d$ conflicts for the tuple, i.e., strictly more than $d$ neighbors try the same tuple in the same iteration or have already permanently adapted the color. In this case we call the tuple \emph{blocked}. 	
	As each of the at most $\Delta$ neighbors contributes at most $2f$ such conflicts there can be at most $z=2f\cdot Z$ blocked tuples. As the length $q$ of the sequence (of tried tuples) is strictly larger than $z$, there is at least one tuple that is not blocked and each node is colored at the end of the algorithm.

	\textbf{\CONGEST implementation:} During the execution of the algorithm all nodes have knowledge of $m$, $q$, $f$, $d$ and $k$ and all nodes can construct the set of polynomials $P_q^f$ locally according to the same lexicographic order. Thus, for an uncolored node to send $k$ trials in iteration $j$, it is sufficient to send its input color (together with $k$ and $j$ which are global knowledge). A node that gets colored can inform its neighbors about the choice in one round. Hence, a node never needs to send more than a single color per round and the message size is upper bounded by $O(\log \Delta)=O(\log n)$ bits and can be executed in the \CONGEST model.
\end{proof}
We remark, that it becomes significantly easier to read the algorithm if $m=\poly\Delta$ and if $\Delta/d=\Delta^{\Omega(1)}$, as this implies $f=O(1)$, which simplifies many of the parameters. However, we chose to present the result in a more general form.

\begin{remark}With a tighter analysis for special cases one can reduce the constants in \Cref{cor:allInOne}, e.g., in the case of $k=X$, the size of the field $\FF_q$ can be chosen smaller.  Due to such a tighter analysis and by assuming $m=\Delta^3$ the leading constant in the $O(\Delta^2)$-coloring by Linial is some  $1<\alpha<10$ \cite{linial92}.  In contrast, the lower bound for the one-round algorithms from \Cref{sec:oneRound} only provides impossibilities below $\Delta^2/2+\Theta(\Delta)$ colors, that is, for a constant $\alpha<1$. Thus, there is a large regime for $\alpha$ where we neither have one-round upper bounds nor lower bounds. As even optimized constants in \Cref{thm:mainSimple} and \Cref{cor:allInOne} leave a gap for the regime of $\alpha$ where lower bounds are known, we focus on having simple proofs that cover all cases of the theorem, instead of optimizing these constants.
\end{remark}
The condition  $d=\beta=\Delta^{\eps}$ in \Cref{cor:allInOne}: One would wish to use a variant of \Cref{cor:allInOne} to compute a $\Delta/2$-defective $O(1)$-coloring in one round, given a $O(\Delta^2)$-coloring. Note, that this setting would require the finite field over which we operate---the field size essentially determines the number of colors---to contain only $q=O(1)$ elements, and to obtain a distinct polynomial for each input color we would have to choose polynomials of degree $f=O(\log \Delta)\gg q$. This immediately violates \Cref{eqn:choiceOfQ}. Also, in that case, the proof of \Cref{thm:mainSimple} breaks as we might have $O(f\Delta/d)=O(f)=\omega(1)$ blocked tuples while only having $q=O(1)$ tuples in the sequence. While slightly weaker requirements on $d$ are possible without breaking the proof,  our requirement ensures that we do not run into these issues. See \cite{Kuhn2009WeakColoring,BarenboimEK14} for the parameter-heavy details on how to iterate the result of \Cref{thm:mainSimple} for $O(\logstar \Delta)$ iterations to obtain defective a $d$-defective $O((\Delta/d)^2)$-coloring with no condition on $d$ (essentially \Cref{cor:allInOne} (5) can also take a defective coloring as input coloring, and then defects add up). 

We point out that the sequences required for \Cref{thm:mainSimple} need not be constructed via polynomials. The proof only requires that the elements of the sequence are from a small enough domain, sequences are long enough, there is one sequence for each input color, and any two sequences intersect in few positions. In \cite[arxiv version]{MT20} such sequences are constructed greedily. Here, we chose to use a construction based on polynomials as the dependency on the $m$-input coloring is better, in particular, when $m=\poly \Delta$, it implies that $f=O(1)$, instead of $f=O(\log \Delta)$ for the greedy-based construction. 
\section{Applications}
\label{sec:applications}
\subsection{Simplifying $(\Delta+1)$-Coloring Algorithms}
\label{ssec:simplified}
\Cref{cor:allInOne} provides a simpler algorithm to compute so called $\beta$-outdegree colorings with $O(\Delta/\beta)$ colors in $O(\Delta/\beta)$ rounds matching the state of the art of \cite{BEG18}. Recall, that an $\beta$-outdegree $c$-coloring is an improper coloring with $c$ colors in which each monochromatic edge is equipped with an orientation such that the outdegree of each vertex is at most $\beta$. These colorings are used in all known $(\Delta+1)$-coloring algorithms whose round complexity is sublinear in $\Delta$ \cite{Barenboim16, FHK, BEG18, MT20}, and hence our simplifications carry over to these algorithms. The core message of our work is that all results in \Cref{cor:allInOne} can be obtained through modifications of Linial's algorithm. Next, we sketch the algorithm in \cite{MT20} to explain that the same is true for their result. The algorithms of \cite{Barenboim16,FHK,BEG18,MT20} use the following high level scheme: 
First compute a $\beta$-outdegree $z$-coloring for a suitable choice of $\beta$ such that $z=O(\Delta/\beta)=o(\Delta)$. Its color classes yield a partition $V_1,\ldots, V_{z}$ of the vertex set. In a second step, the partition is used as a schedule. In order to compute the nodes' final output color we iterate through the schedule. For the purpose of this exposition we may assume that all nodes of $V_i$ are colored with their final output color after processing them (formally, this is only true for certain nodes of $V_i$ and an additional recursion is required to color all nodes of the graph). When processing $V_i$, we ensure that a node does not get colored with a color of any of its already colored neighbors in $V_1\cup \ldots \cup V_{i-1}$, that is, when a node is processed it does not have all of the $\Delta+1$ output colors available, but instead its \emph{list of available colors} does not include the colors of its already colored neighbors. Thus, the resulting problem that we need to solve on $G[V_i]$ is a so called \emph{list coloring} problem.  The additional lever, when coloring $V_i$,  is that $G[V_i]$  is equipped with an orientation with a small outdegree. The core result of the paper with the title \emph{'Linial for Lists'} \cite{MT20} is a generalization of Linial's $1$-round color reduction algorithm to the list coloring problem that works in two rounds in graphs with small outdegrees. Hence, the crucial coloring step as well as the algorithm to compute the necessary schedule, i.e., the $\beta$-outdegree coloring, are generalizations of Linial's algorithm.

\subsection{Improved $O(\Delta^{1+\eps})$-Coloring Algorithms}
If one aims for $O(\Delta)$ colors (instead of $\Delta+1)$ colors) the  scheme  that we explained in \Cref{ssec:simplified} works for a suitable choice of $\beta=\Theta(\sqrt{\Delta})$, yielding a runtime of $O(\Delta/\beta)=O(\sqrt{\Delta})$ rounds.  
\begin{theorem}[\mbox{\cite{Barenboim16,BEG18}}]
	\label{thm:BarenboimCongest}
	There is a deterministic \CONGEST algorithm that computes a $O(\Delta)$-coloring in $O(\sqrt{\Delta}+\logstar n)$ rounds on any graph with maximum degree $\Delta$.  
\end{theorem}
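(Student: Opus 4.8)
The plan is to combine the simple $\beta$-out-degree coloring of \Cref{cor:allInOne}(4) with the single-pass coloring scheme of Barenboim \cite{Barenboim16}; the only genuinely new ingredient is that the recursive arbdefective-coloring subroutine used in \cite{Barenboim16,BEG18} is replaced by the far more direct one coming from \Cref{thm:mainSimple}. First I would spend $O(\logstar n)$ rounds running Linial's algorithm from the unique identifiers to obtain a $\Delta^4$-input coloring, so that no dependence on $n$ remains afterwards. Then, applying \Cref{cor:allInOne}(4) with $\beta=\Theta(\sqrt{\Delta})$---which indeed satisfies the hypothesis $\beta=\Delta^{1-\Omega(1)}$---I obtain, in $O(\Delta/\beta)=O(\sqrt{\Delta})$ rounds, a $\beta$-out-degree coloring with $t=O(\Delta/\beta)=O(\sqrt{\Delta})$ color classes $V_1,\dots,V_t$, where by part~(1) of \Cref{thm:mainSimple} every induced subgraph $G[V_i]$ carries an orientation of out-degree at most $\beta$.

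Next I would color the classes $V_1,\dots,V_t$ one at a time, maintaining a partial proper coloring from a fixed palette of size $c=2\Delta$. When $V_i$ is processed, all earlier classes are already colored, so for $v\in V_i$ the list $L(v)$ of palette colors avoided by $v$'s colored neighbors has size $|L(v)|\ge c-\Delta=\Delta$; it remains to list-color the bounded-out-degree graph $G[V_i]$ from these lists. This is precisely the inner step of \cite{Barenboim16}, which I would invoke as a black box: a graph of out-degree $O(\sqrt{\Delta})$ equipped with lists of size $\Omega(\Delta)$ can be list-colored in $O(1)$ \CONGEST rounds---a bounded-out-degree relative of Linial's one-round color reduction and of \Cref{thm:linialForLists}, but tuned so as to avoid the extra $\log\Delta$ factor that \Cref{thm:linialForLists} would cost. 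Since the palette is never enlarged, the output uses $c=O(\Delta)$ colors, and the round count is $O(\logstar n)$ for Linial, plus $O(\sqrt{\Delta})$ for the arbdefective coloring, plus $t\cdot O(1)=O(\sqrt{\Delta})$ for the pass, i.e.\ $O(\sqrt{\Delta}+\logstar n)$ in total.

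The main obstacle is the inner list-coloring step: forcing it to run in $O(1)$ rounds, in \CONGEST, and with no outer recursion pins down the regime $\beta=\Theta(\sqrt{\Delta})$---small enough that roughly $\beta^2=\Theta(\Delta)$ colors suffice for the list-coloring, large enough that only $O(\Delta/\beta)=O(\sqrt{\Delta})$ classes arise. Verifying that this balance actually closes, and that the steadily shrinking lists can be maintained with $O(\log n)$-bit messages, is the delicate part; everything else is already in hand. I would also remark that if one is content with a $\Delta+o(\Delta)$-coloring the same scheme runs with the smaller $\beta=\Theta(\Delta^{1/4})$ at the price of $O(\Delta^{3/4})$ rounds for the pass plus an $O(\Delta^{3/4})$-round cleanup that removes the excess color classes one per round, but for the stated $O(\Delta)$-coloring the choice $\beta=\Theta(\sqrt{\Delta})$ is exactly the right one.
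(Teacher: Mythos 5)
Your proposal follows exactly the route the paper takes: the paper states this theorem as a citation to \cite{Barenboim16,BEG18}, and in the paragraph preceding it sketches precisely your scheme---compute a $\beta$-out-degree $O(\Delta/\beta)$-coloring with $\beta=\Theta(\sqrt{\Delta})$ via \Cref{cor:allInOne} (or \cite{BEG18}), then make a single pass through the $O(\sqrt{\Delta})$ color classes using Barenboim's one-shot inner coloring step as a black box. Your added remarks (the $\beta=\Theta(\Delta^{1/4})$ variant with an $O(\Delta^{3/4})$-round cleanup for $\Delta+1$ colors, and the need to keep the inner step constant-round and within \CONGEST bandwidth) also mirror the paper's own discussion.
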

We use this algorithm and our defective coloring algorithm from \Cref{cor:allInOne} to improve the trade-off between the number of colors and the runtime from $O(k\Delta)$ vs $O(\Delta/k)$ (\Cref{cor:allInOne}) to $O(k\Delta)$ vs $O(\sqrt{\Delta/k})$. 

\thmbetterColoring*
\begin{proof}
	First, compute an $O(\Delta^2)$-coloring in $\logstar n+O(1)$ rounds using Linial's algorithm \cite{linial92}.
Set $d=\Delta^{1-\eps}$, then use \Cref{cor:allInOne} (part 6) to compute a $d$-defective coloring $\psi$ with $O((\Delta/d)^2)$ colors in $O(\Delta/d)=O(\Delta^{\eps})$ rounds. Then, on each color class in parallel compute a $O(d)$-coloring in $O(\sqrt{d})=\Delta^{1/2-\eps/2}$ rounds (the $\logstar n$ of \Cref{thm:BarenboimCongest} vanishes as we already have a $O(\Delta^2)=\poly d$-coloring) via \Cref{thm:BarenboimCongest} using a distinct color space for each color class of $\psi$, that is, each node $v$ gets a color $\phi(v)$ from this second step and the final output color of node $v$ is set to be the tuple $(\psi(v),\phi(v))$. In total we use $O((\Delta/d)^2\cdot d)=O(\Delta^2/d)=O(\Delta^{1+\eps})$ colors. 
\end{proof}
While the above way is a simple way to prove \Cref{thm:fasterCol} when using \Cref{thm:BarenboimCongest} as a blackbox, an alternative algorithm can be obtained by using the $\beta$-outdegree coloring result from \Cref{cor:allInOne} (with $\beta=\Delta^{1/2+\eps/2})$ and carefully choosing the remaining parameters  in the framework of \cite{Barenboim16}. The corresponding parameters do not appear in the scheme in \Cref{ssec:simplified}. 

\subsection{Ruling Sets}
A \emph{$(2,r)$}-ruling set of a graph $G=(V,E)$ is a subset $S\subseteq V$ of the vertices that is an independent set and satisfies that for any vertex $v$ in $V$ there is a vertex $s\in S$ in hop distance at most $r$ \cite{awerbuch89}. The following result uses colorings to compute a ruling sets. 
\begin{lemma}[\mbox{\cite[arxiv version]{KMW18}}]
	\label{lem:rulingSet}
For any $B\geq 2$ there exists a deterministic distributed \CONGEST 
algorithm that, given a $C$-coloring, computes a
$(2,\lceil\log_B C\rceil)$-ruling set in $O(B\log_B C)$ rounds. 
\end{lemma}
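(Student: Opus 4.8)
The plan is to prove the lemma by a \emph{digit-peeling} algorithm on the colors, which is a standard way to turn a coloring into a ruling set. Set $r=\ceil{\log_B C}$, so that $B^r\ge C$ and every color $c\in\{0,\dots,C-1\}$ can be written as a string $(c_1,\dots,c_r)$ of $r$ base-$B$ digits (padding with leading zeros). The algorithm maintains a vertex set $S$, initialized to $S=V$, which is only ever shrunk and which will be the output. It runs in $r$ phases; phase $i$ consists of $B-1$ sub-phases indexed by $d=1,2,\dots,B-1$ processed in increasing order, and in sub-phase $(i,d)$ every vertex $v\in S$ whose $i$-th color digit equals $d$ checks, in one \CONGEST round, whether it currently has a neighbor in $S$ whose $i$-th color digit is strictly smaller than $d$, and if so removes itself from $S$. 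One round per sub-phase suffices (in sub-phase $(i,d)$ each vertex broadcasts whether it is still in $S$ together with its $i$-th digit, an $O(\log B)$-bit value), so the total number of rounds is $(B-1)r=O(B\log_B C)$, as claimed, and everything runs in \CONGEST.

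Next I would verify that the final $S$ is independent. Suppose $u\neq v$ are adjacent and both lie in $S$ at termination. Since the input coloring is proper they have distinct colors, hence differ in some digit position $i$; say the $i$-th digits are $d_u\neq d_v$ with $d_u>d_v$. Because $S$ only shrinks, both $u$ and $v$ belong to $S$ throughout phase $i$, so in sub-phase $(i,d_u)$ the vertex $u$ sees its neighbor $v\in S$ with strictly smaller $i$-th digit and removes itself — contradicting $u\in S$ at the end. Hence $S$ is an independent set.

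Then I would bound the ruling distance by a chain argument; this is the step that genuinely relies on the ordering of the sub-phases and where the main care is needed. Whenever a vertex $u$ is removed, it is removed in some sub-phase $(i,d_u)$ because of a witness neighbor $v\in S$ with $i$-th digit $d_v<d_u$. The key point is that $v$ is then \emph{not} removed during phase $i$ at all: $v$ is processed in sub-phase $(i,d_v)$, which precedes sub-phase $(i,d_u)$, and since $v$ is still in $S$ at sub-phase $(i,d_u)$ it must have survived its own sub-phase. Therefore $v$ either stays in $S$ forever or is removed in a strictly later phase $j>i$. Iterating, a removed vertex $u$ starts a chain $u=x_0,x_1,x_2,\dots$ of vertices, each adjacent to the next, where $x_\ell$ is removed in phase $i_\ell$ and $i_0<i_1<i_2<\cdots$; since all $i_\ell\in\{1,\dots,r\}$ the chain has at most $r$ edges, and its last vertex lies in $S$. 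Consequently every vertex of $V$ is within hop distance $r=\ceil{\log_B C}$ of $S$, so $S$ is a $(2,\ceil{\log_B C})$-ruling set, completing the proof.
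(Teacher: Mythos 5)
Your proof is correct. Note, however, that this paper does not actually prove \Cref{lem:rulingSet} — it cites it as a black box from [KMW18] — so there is no in-paper proof to compare against; your base-$B$ digit-peeling construction (process the $B-1$ nonzero digit values of each of the $\lceil\log_B C\rceil$ positions in increasing order, removing a vertex when a surviving neighbor has a strictly smaller current digit) is exactly the standard argument behind the cited result, generalizing the $B=2$ bit-peeling $(2,\log C)$-ruling-set algorithm, and the step you correctly flag as the crux — that a witness $v$ for a removal in sub-phase $(i,d_u)$ has already survived its own sub-phase $(i,d_v)$ with $d_v<d_u$ and therefore all of phase $i$ — is precisely what makes the ruling distance $\lceil\log_B C\rceil$ rather than $(B-1)\lceil\log_B C\rceil$.
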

We use \Cref{lem:rulingSet} and \Cref{thm:fasterCol} to compute $(2,r)$-ruling sets by adjusting the number of colors such that the runtime of computing the coloring and using it via \Cref{lem:rulingSet} are balanced. 
\thmRulingSets*
\begin{proof}
	Set $\eps=\frac{r-2}{r+2}$, and use \Cref{thm:fasterCol} to compute an $O(\Delta^{1+\eps})$-coloring in $O(\Delta^{1/2-\eps/2})+\logstar n$ rounds. Let $C=O(\Delta^{1+\eps})=O(\Delta^{\frac{2r}{r+2}})$ be the number of colors of this coloring. 
 Now, set $B$ such that $\lceil\log_B C\rceil=r$ and apply \Cref{lem:rulingSet} to compute a $(2,r)$-ruling set in $O(B\log_BC)=O(B\cdot r)=O(C^{1/r})$ rounds. 
Ignoring the $\logstar n$ term, the total runtime is upper bounded by
\begin{align*}
O(\Delta^{\frac{1}{2}-\frac{r-2}{2r+4}}+C^{1/r}) & = O(\Delta^{\frac{1}{2}-\frac{r-2}{2r+4}}+ \Delta^{\frac{2}{r+2}}) 
  =O(\Delta^{\frac{2}{r+2}})~. & & \qedhere
\end{align*}
\end{proof} 
Interestingly, for $r=2$ the state of the art runtimes for $(2,r)$-ruling set is the same as the complexity for computing an $O(\Delta)$ coloring. Note that the runtime bound of \Cref{thm:rulingSet} cannot be achieved for $r=1$.  In that case, the ruling sets are better known under the name \emph{maximal independent sets} for which a $\Omega(\Delta)$-round lower bound is known if the runtime's $n$-dependency is limited to $O(\logstar n)$ \cite{FOCS19MIS}.

An \emph{$(\alpha,r)$-ruling set} is a subset $S\subseteq V$ that is  an independent set in the power graph $G^{\alpha-1}$ that satisfies that each vertex $v\in V$ has a vertex in $S$ in distance at most $r$. In the \LOCAL model the results of \Cref{thm:rulingSet} can be extended to $(\alpha,r)$-ruling sets as one can simulate any algorithm on $G^{\alpha-1}$ in the original network graph. For details on these black-box extensions see, e.g.,  \cite{BEPSv3,KMW18}.

\section{One-Round Color Reduction}
\label{sec:oneRound}
The objective of this section is to show the following theorem.
\oneRoundTight*

In addition to the values for $k$ that are stated in \Cref{sec:intro}  ($k=1,2,3,4$) note that one requires $5\Delta-10$ input colors to reduce $5$ colors, and $6\Delta-18$ input colors to reduce $6$ colors. 
The proof is split into two lemmas. In \Cref{lem:colorReduction},  we provide a $1$-round color reduction algorithm, and in \Cref{lem:oneRoundLB} we show that the algorithm is tight up to each single color.
For a function $f:V\rightarrow [m]$ and a set $S\subseteq V$ we denote $f(S)=\{f(v)\mid v\in S\}$. For a node $v\in V$ of a given graph $G=(V,E)$ we denote the set of its neighbors by $N(v)$.

The following result reduces more colors than the $1$-round algorithms in \cite{KuhnW06,SzegedyV93}. 
\begin{lemma}[Color Reduction]\label{lem:colorReduction}
	For an integer $1\leq k\leq \frac{\Delta}{2}+\frac{3}{2}$ there is a $1$-round color reduction procedure from $m\geq k(\Delta-k+3)$ to $k(\Delta-k+2)$ (reduces $k$ colors). 
\end{lemma}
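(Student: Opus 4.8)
\textbf{Proof plan for \Cref{lem:colorReduction}.}

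The plan is to generalize the polynomial/set-family idea underlying Linial's one-round reduction, but tuned to the low-color regime where we only want to shave off a small number $k$ of colors. First I would observe that a one-round color reduction algorithm is completely described by a function that assigns to each input color $i\in[m]$ a set $S_i$ of output colors (the set of colors node $v$ is ``willing'' to take), with the property that for any two distinct input colors $i\neq j$ we have $S_i\not\subseteq \bigcup$ of the output-color choices forced by the neighbors; more precisely, the standard requirement is that each $S_i$ is nonempty and that for all $i\neq j$, $S_i \not\subseteq \{c\}$ is not quite it — rather, we need that $|S_i| \ge \Delta+1$ fails in general, so instead we use the ``cover-free'' style condition: a node with color $i$ must always find a color in $S_i$ avoided by all $\le \Delta$ neighbors, which is guaranteed if $|S_i| > \sum_{j\in N(v)} |S_i\cap S_j|$ is replaced by the cleaner sufficient condition that $S_i\not\subseteq S_j$ for every neighbor's color $j$ — but that alone is not enough for $\Delta$ neighbors. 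The robust sufficient condition I would actually use: each node deterministically picks the \emph{minimum} element of $S_i$ not picked by any neighbor; this works as long as for every $i$, $|S_i|\ge \Delta+1$, OR, more cleverly, we exploit that neighbors have \emph{distinct} colors, so the relevant bound is $|S_i| > |\{j : j\in \text{colors of neighbors}\}|$-weighted intersections. The honest move here is to use the construction: index output colors by pairs and let $S_i$ be a ``line''-like object so that $|S_i\cap S_j|\le 1$, giving the $\Delta^2/4$-type bounds.

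Concretely, here is the construction I would carry out. Let the target number of output colors be $N=k(\Delta-k+2)$, and think of the output color space as a $k\times(\Delta-k+2)$ grid, i.e., pairs $(a,b)$ with $a\in[k]$, $b\in[\Delta-k+2]$. To each input color $i$ (there are $m\ge k(\Delta-k+3)$ of them) I would assign an injective ``column function'' $g_i:[k]\to[\Delta-k+2]$ chosen from a family such that any two distinct functions $g_i,g_j$ agree in at most one coordinate — this is exactly a set of MDS-like / Reed–Solomon-style codewords of length $k$ over an alphabet of size $\ge \Delta-k+2$ with minimum distance $k-1$, which exists whenever the alphabet is large enough and $m$ is at most roughly (alphabet size)$^2$; one checks $k(\Delta-k+3)$ fits. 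Define $S_i=\{(a,g_i(a)) : a\in[k]\}$, so $|S_i|=k$ and $|S_i\cap S_j|\le 1$ for $i\neq j$ since two codewords agree in $\le 1$ position. Now run the one-round algorithm: node $v$ with color $i$ receives the chosen output colors of its $\le\Delta$ neighbors; each neighbor with color $j\neq i$ blocks at most $|S_i\cap S_j|\le 1$ of the $k$ elements of $S_i$, and a neighbor with the same color $i$ blocks at most $k-1$ (all but the one it itself picked, but actually it picks the same minimal one, so it blocks that one element, hence $\le 1$ — I would argue same-color neighbors, following the same deterministic tie-break, all select the identical element, blocking exactly one). The subtle counting is: neighbors can share color $i$ with $v$, and there are at most... but since they all make the same choice we only lose $1$ from same-colored neighbors collectively. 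Thus the number of blocked elements of $S_i$ is at most $\Delta$ only if all neighbors have distinct colors $\ne i$; to get a proper bound I would instead note each of the $\le\Delta$ neighbors blocks $\le 1$ element, so at most $\Delta$ elements blocked — but $|S_i|=k\le\Delta$, so this is not immediately enough, which is why the bound must be sharper.

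The main obstacle, and where I expect to spend the real effort, is precisely this last counting step: making $|S_i|=k$ survive against up to $\Delta$ neighbors when each can block one element. The resolution I anticipate is a pigeonhole/double-counting over the \emph{first coordinate}: $S_i$ has exactly one element in each ``row'' $a\in[k]$, and a neighbor with color $j$ blocking element $(a,g_i(a))$ must have $g_j(a)=g_i(a)$; I would fix a row $a$ and bound how many input colors $j$ can have $g_j(a)=g_i(a)$ — by the code's distance property, knowing $g_j$ agrees with $g_i$ in coordinate $a$ still leaves $g_j$ free, so that is not restrictive. Instead the right accounting is: we do \emph{not} need every $S_i$ to beat $\Delta$ blockers; we need that across the $k$ rows, the $\Delta$ neighbors cannot cover all $k$ rows ``at the $g_i$-value.'' A neighbor of color $j\neq i$ covers row $a$ only if it picked $(a,g_i(a))$, which requires $g_j(a)=g_i(a)$ \emph{and} that this was the neighbor's own minimal unblocked choice. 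The clean sufficient inequality to close the argument is $m \ge k(\Delta-k+3)$, which should be exactly what guarantees the code is large enough while $k(\Delta-k+2)$ output colors suffice; I would verify the arithmetic $k\cdot(\text{blockers per row})<$ something, reducing to showing that at most $\Delta-k+1$ distinct neighbors can interfere per row on average, and conclude via $k+k(\Delta-k+1)<m$-style bookkeeping. Matching this to the stated hypothesis $1\le k\le \Delta/2+3/2$ (which ensures $\Delta-k+2\ge k$, i.e., the grid is at least square, so the code alphabet is big enough to host $k$-letter codewords with distance $k-1$) will be the final check.
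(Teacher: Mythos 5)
Your approach — a Linial-style low-intersecting set family $\{S_i\}$ with $|S_i|=k$ and pairwise intersection at most $1$ over a $k\times(\Delta-k+2)$ grid — is genuinely different from the paper's, and the gap you yourself flag in your third paragraph is fatal, not a detail to be patched. With $|S_i|=k$ and $|S_i\cap S_j|\le 1$, a node $v$ with $\Delta$ neighbors of $\Delta$ distinct colors can have up to $\min(k,\Delta)$ elements of $S_i$ eliminated. Since $k\le \Delta/2+3/2\ll\Delta$, that is all of $S_i$. Worse, in a one-round algorithm $v$ does not see its neighbors' neighborhoods, so it cannot know which single element of $S_j$ a neighbor of color $j$ will end up choosing; the only safe count is the full $|S_i\cap S_j|$. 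This is precisely why Linial's construction needs $|S_i|>\Delta\cdot\max_j|S_i\cap S_j|$ and therefore $\Theta(\Delta^2)$ output colors. You cannot shrink $|S_i|$ to $k$ and keep that sufficient condition. The pigeonhole-over-rows rescue you sketch (``at most $\Delta-k+1$ distinct neighbors can interfere per row on average'') has no mechanism behind it; nothing in the distance property of the code bounds, per row, how many of $v$'s actual neighbors happen to agree with $g_i$ there.

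The paper's proof avoids this entirely by breaking the symmetry between colors: only nodes whose input color lies among the top $k$ colors (those in $[m]\setminus[\ell]$) ever recolor; everyone else either keeps their color or, in a degenerate corner case, does a trivial $[\Delta+1]$ greedy step. Each recoloring color $i$ owns a private regime $R_i$ of size $\Delta-k+2$ inside the output palette, and the regimes of distinct recoloring colors are disjoint, so $v$'s recoloring neighbors never compete with $v$. The only competitors are $v$'s $d(v)$ non-recoloring neighbors, whose output color is \emph{fixed and known} from their input color alone. The crux is then a tradeoff: if $d(v)$ is large (close to $\Delta$), then few of the $k-1$ other recoloring colors appear around $v$, and for each absent one $v$ ``steals'' a dedicated slot $R_j(i)$ (or $R_j(i-1)$), inflating $|F(v)|$; if $d(v)$ is small, $|R_i|=\Delta-k+2$ already beats $d(v)$. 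Either way $|F(v)|\ge d(v)+1$. That asymmetric design, and the disjointness of the $F(v)$'s across recoloring colors via the $R_j(i)$ vs.\ $R_j(i-1)$ indexing, is the idea your proposal is missing; a symmetric cover-free family cannot get below $\Theta(\Delta^2)$ colors in one round, which is exactly the regime this lemma is designed to go beneath.
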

The intuitive idea of the algorithm is that only vertices in  the $k$ largest color classes change their color. Each of these \emph{recoloring colors} has its own small hardcoded output color regime from which it can pick a \emph{free color}. However, the size of the regime is smaller than $\Delta$ and it might be that all of its colors are blocked by neighbors that do not recolor themselves. But this implies that there are some recoloring colors that do not appear in the node's neighborhood and it can steal colors from those recoloring color's regimes to gain the desired freedom (one stolen color per regime). 

In the rest of the section, for an integer $x$, we use the notation $[x]$ to refer to the set $\{0,\ldots,x-1\}$.  
\begin{proof}[Proof of \Cref{lem:colorReduction}]
We may assume that $m=k(\Delta-k+3)$ is the number of input colors as for $m'\geq m$ input colors, one can leave $m'-m$ colors unchanged and apply the algorithm to the remaining $m$ colors. Let $\ell=k(\Delta-k+2)\geq \Delta+1$ be the number of (desired) output colors, $\phi:V\rightarrow [m]$ the input coloring and $\psi:V\rightarrow [\ell]$ the to be computed output coloring. 
	
	Fix  $k$ disjoint color regimes of output colors $R_0,\ldots R_{k-1}$, each of size $\Delta-k+2$ as follows: For $0\leq i<k$ and $j\in [\Delta-k+2]$, let $r_i(j)= i\cdot(\Delta-k+2)+j\in [\ell]$ and $R_i=\{r_i(j) \mid j\in [\Delta-k+2]\}\subseteq [\ell]$.

We refer to $R_i$ as the \emph{$i$-th regime}. The regimes are disjoint and each regime $R_i$ is of size $|R_i|=\Delta-k+2\geq k-1$ because $k\leq \Delta/2+3/2$ holds. Additionally, for each of the $k$ regimes, let $f_i: \{\ell, \ldots, \ell+i-1, \ell+i+1, m\}\rightarrow R_i$ be an abritrary injective function into the regime. Note that $f_i$ exists as its domain is of size $m-\ell-1=k-1\leq |R_i|$. Nodes execute the following algorithm that takes $1$ round as a node needs to learn its neighbors input colors. 
	
	\begin{algorithm}[H]\caption{Executed at each node $v$, output coloring $\psi$,  input coloring $\phi$}\label{alg:trycolor}
		\textbf{Send} input color $\phi(v)$ to neighbors; \textbf{Receive} set of neighbors' input colors $\phi(N(v))$;\\
		\textbf{Case $\phi(v)< \ell$:} $\psi(v):=\phi(v)$; \textbf{exit}; \\
		\textbf{Case $\max_{u\in N(v)}\{\phi(u)\}<\ell$:}  $\psi(v):=\min ([\Delta+1]\setminus \phi(N(v)))$; \textbf{exit}; \\
		\textbf{Case (else): }	
		\quad $F(v):= R_{\phi(v)}\cup\{f_j(\phi(v))\mid 0\leq j<k, \ell+j\notin \phi(N(v))\}$ \\
		\hspace{2.9cm} $\psi(v):=F(v)\setminus \phi(N(v))$
		
	\end{algorithm}
	Let $v$ be a node. If $v$ executes the first case, it does not change its color and neighbors ensure to not output the same color as $v$. If $v$ executes the second case, all of its neighbors execute the first case and do not change their colors; $v$ selects a color not conflicting with any of these.  Hence, for the rest of the proof assume that $v$ executes the third case. 
	By the next lemma we have $F(v)\cap F(w)=\emptyset$ for any neighbor $w$ of $v$, regardless of which case $w$ executes.

	\begin{claim}
		\label{claim:NoClashCase3}
		 For any two neighbors $v$ and $w$ we have $F(v)\cap F(w)=\emptyset$.
	\end{claim}
	\begin{proof}
		By definition and as $\phi(w)\neq \phi(v)$, the set $F(w)$ does not contain any color of $R_{\phi(v)}$. Similarly, the set $F(v)$ does not contain any color of $R_{\phi(w)}$. Hence, if $F(v)$ and $F(w)$ intersect, the intersection must lie in some regime $R_j$, where $j\neq \phi(v)$ and $j\neq \phi(w)$. However, $F(v)$ and $F(w)$ only contain a single color each in such an $R_j$. These colors are $f_j(\phi(v))$ and $f_j(\phi(w))$, respectively. We obtain $f_j(\phi(v))\neq f_j(\phi(w))$ because  $f_j$ is injective and $\phi(v)\neq \phi(w)$ holds. 
	\end{proof}
	 As $v$ executes the third case, we obtain $\phi(v)\geq\ell$, and no neighbor of $v$ can execute the second case. Hence, all neighbors of $v$ either stick to their color that is already $<\ell$, or also execute the third case. Let $d(v)$ be the number of neighbors of $v$ that execute the first case, i.e., do not recolor themselves. As $v$ does not use execute the second case,  we obtain $d(v)<\Delta$. 
	In order to show that $v$ can always select a color, i.e., that   that $F(v)\setminus \phi(N(v))\neq \emptyset$ holds, we show that  that $|F(v)|\geq d(v)+1$ holds.  We lower bound the size of $F(v)$ in two cases. 
	
	\textit{Case $\Delta-d(v)\leq k-1$: } 	Let $Y=\{\phi(w)\mid w\in N(v), \text{$w$ executes the third case}\}$ be the set of input colors of neighbors of $v$ that execute the third case. We obtain $|Y|\leq \min\{k-1,\Delta-d(v)\}$ (the $k-1$ term appears as the input color of node $v$ cannot appear as an input color in its neighborhood).
	 Let $X=[m]\setminus ([\ell]\cup Y\cup \{\phi(v)\})$ be the set of input colors $\geq \ell, \neq \phi(v)$ that do not appear in the neighborhood of $v$. We have $|X|= (k-1)-|Y|\geq  (k-1)-\min\{k-1,\Delta-d(v)\}=d(v)+k-\Delta-1$. For each color in $X$ there is a corresponding regime from which $F(v)$ contains one color. We obtain. 
	\begin{align*}
	|F(v)|\geq |R_j|+|X| & \geq \Delta-k +2 + d(v)+k-\Delta-1=d(v)+1.
	\end{align*}

	\textit{Case $\Delta-d(v)> k-1$: } The condition implies that $d(v)<\Delta-k+1$ holds and we obtain 
		\begin{align*}
		|F(v)|\geq |R_j|=\Delta-k+2> d(v)+1.
	\end{align*}

In both cases we obtain $|F(v)|\geq d(v))+1$. Thus, in the last line of the algorithm,  $v$ can pick a color in $F(v)$ not used by the $d(v)$ neighbors that do not recolor themselves, and due to \Cref{claim:NoClashCase3}  there is no conflict with any neighbor that executes the third case. To finish the proof, recall that no neighbor executes the second case.
\end{proof}
Next, we show that the result of \Cref{lem:colorReduction} is tight. The next lemma can be seen as a generalization of a result in \cite{disc16_coloring}), which proved a result of a similar flavor but only for $m\geq \frac{\Delta^2}{4}+\frac{\Delta}{2}+1$.
\begin{lemma}[Lower Bound for $1$-Round Algorithms]
	\label{lem:oneRoundLB}
	Let $k,\Delta$ and $m$ be arbitrary integers satisfying $1\leq k\leq \Delta-1$ and $m\leq k(\Delta-k+3)-1$. Then, 
	there is no $1$-round \LOCAL algorithm that computes a $q=m-k$ coloring on every $m$-input colored graph with maximum degree $\Delta$.
\end{lemma}
\begin{proof}
	We  may assume that $m=k(\Delta-k+3)-1$, as an impossibility result for this choice of $m$ implies the same result for any $m'\leq m$. 	Assume for contradiction that there is a $1$-round algorithm $\mathcal{A}$ that colors a graph with an input coloring with $m$ colors with $q=m-k$ output colors. 
	Call an input color $\phi$ \emph{sensitive} if for any output color $c\in [q]$, there is an input color $\phi'\in [m]$ such that if a node $v$ is input colored with $\phi$ and has a neighbor with input color $\phi'$ then node $v$ does not output $c$ (regardless of the colors of other neighbors). In the rest of the proof, our notation identifies nodes with their input color. 
	\begin{claim}
		\label{claim:sensitive}
		There are at least $k$ sensitive input colors. 
	\end{claim}
\begin{proof} We introduce the following definition where $c\in [m]$ is an output color. An input color $\phi$ is called \emph{$c$-robust} if for all input colors $\phi'\neq \phi$ there is a set $A$ of size at most $\Delta$ satisfying $\phi'\in A$ and $\mathcal{A}\big((\phi, A)\big)=c$. 
		Let $S=\bigcup_{c\in [q]}\{\phi\in [m]\mid \text{$\phi$ is $c$-robust}\}$ be the set of input colors that are $c$-robust for some $c$. Next, we upper bound the size of $S$ by $q$. In particular, we show that each of the sets in the union contains at most a single element. Assume for contradiction that $\phi$ and $\phi'$ are two distinct input colors that are $c$-robust. By definition, there are sets  $A_{\phi}$ and $A_{\phi'}$ of size at most $\Delta$, satisfying $\phi'\in A_{\phi}$ and $\phi\in A_{\phi'}$ such that $c=\mathcal{A}\big((\phi, A_{\phi})\big)=\mathcal{A}\big((\phi', A_{\phi'})\big)$, a contradiction to the correctness of $\mathcal{A}$ as $(\phi, A_{\phi})$ and $(\phi', A_{\phi'})$ can be neighborhoods of neighboring nodes in some graph. 

	Let $T=[m]\setminus S$ and observe that $|T|\geq m-|S|\geq k$. For each $\phi\in T$ and for all $c\in [q]$ $\phi$ is  not $c$-robust, that is, for all $c$ there exists an input color $\phi'$ such that for all sets $\phi'\in A$ of size at most $\Delta$, $\mathcal{A}(\phi,A)\neq c$. Hence, each color in $T$ is sensitive and the claim follows. 
\renewcommand{\qed}{\ensuremath{\hfill\blacksquare}}
\end{proof}
\renewcommand{\qed}{\hfill \ensuremath{\Box}}

	Next, we construct a $1$-hop neighborhood that cannot be colored with one of the $q$ output colors and thus leads to a contradiction. Let $T$ be a set of sensitive input colors of size $k$ which exists due to \Cref{claim:sensitive}.
	  Consider the partial neighborhoods $N_x=(x,T\setminus\{x\})$ with one node $x\in T$ in the center and the $k-1$ other nodes in $T\setminus \{x\}$ as $1$-hop neighbors of $x$. This is a valid partial neighborhood as $|T\setminus \{x\}|\leq k-2\leq \Delta$. We call a color $c\in[q]$ a \emph{candidate color} for $N_x$ if there exists a set of input colors $B$ (of size $\leq \Delta)$ satisfying $T\setminus \{x\}\subseteq B\subseteq [m]$ such that $\mathcal{A}\big((x,B)\big)=c$, i.e., algorithm $\mathcal{A}$ outputs color $c$ on the neighborhood $(x,B)$. 
	
	\begin{claim} \label{claim:disjointCandidates}
		For $x\neq x'\in T$ the sets of candidate colors of $N_x$ and $N_{x'}$ are disjoint. 
		\end{claim}
	\begin{proof}
		Assume for contradiction, that $c\in [q]$ is a candidate color for $N_x$ and $N_{x'}$ and let $B\supseteq T\setminus\{x\}$ and $B'\supseteq T\setminus\{x'\}$ be the respective sets such that $c=\mathcal{A}\big((x,B)\big)=\mathcal{A}\big((x',B)\big)$. As $x\in B'$ and $x'\in B$, the neighborhoods $(x,B)$ and $(x',B')$ can occur next to each other in a graph, a contradiction to the correctness of $\mathcal{A}$. 
\renewcommand{\qed}{\ensuremath{\hfill\blacksquare}}
\end{proof}
\renewcommand{\qed}{\hfill \ensuremath{\Box}}
By \Cref{claim:disjointCandidates} and by the pigeonhole principle there exists one $x_*\in T$ for which $N_x$ has at most $\alpha$ candidate colors where
	\begin{align*}
		\alpha=\Bigl\lfloor \frac{q}{|T|}\Bigr\rfloor 
		&	=\Bigl\lfloor \frac{q}{k}\Bigr\rfloor
			=\Bigl\lfloor \frac{m-k}{k}\Bigr\rfloor
			=\Bigl\lfloor \frac{k\Delta-k^2+2k-1}{k}\Bigr\rfloor\\
		&	=\Bigl\lfloor \frac{k(\Delta-k+2)-1}{k}\Bigr\rfloor 
		 =(\Delta-k+2)-1
		 =\Delta-(k-1).
	\end{align*}
Now, let $C_*\subseteq [q]$ be the set of candidate colors of $N_{x_*}$. As $x_*$ is in $T$ and all colors in $T$ are sensitive, for each $c\in C_*$ there exists some $\phi_{c}\in [m]$ such that $\mathcal{A}$ does not output $c$ for $v$ whenever $\phi_{c}$ is the input color of one of $v$'s neighbors. We conclude that the $1$-hop neighborhood $\tilde{N}_{x_*}=(x_*,\{\phi_c\mid c\in C_*\}\cup (T\setminus\{x\}))$ cannot be colored by $\mathcal{A}$, a contradiction. 
The choice of parameters is important. The constructed neighborhood $\tilde{N}_{x_*}$ is a feasible neighborhood as 	$|\{\phi_c\mid c\in C_*\}\cup (T\setminus\{x\})|\leq \alpha+k-1=\Delta$.
\end{proof}

\Cref{lem:oneRoundLB} implies a heuristic lower bound of $\Omega(\Delta)$ to reduce a $\Delta^2/2$-coloring to a $\Delta^2/5$-coloring (if you have $\leq \Delta^2/4$ input colors you can remove at most $\Delta/2$ colors per iteration).  In contrast, the algorithm from \Cref{cor:allInOne} (for a suitable choice of $k$) can reduce a $\Delta^4$-coloring to a $\Delta^2/5$ coloring in $O(1)$ rounds. Thus, the iterative application of \emph{tight} bounds for $1$-round algorithms can be beaten significantly by a simple $O(1)$-round algorithm. This suggests that it is important to understand  constant-time algorithms to settle the complexity of  distributed graph coloring problems. 

\section{Conclusion}
\label{sec:conclusion}
In the current paper we have seen a simple algorithm for distributed graph coloring in which  each vertex locally computes a permutation of the output colors and then \emph{tries} them in batches. A a trial is \emph{successful} if there is no \emph{conflict}, that is, no neighbor  tries the same color in the same round and no neighbor is already permanently colored with that color. Depending on the size of the batches, this algorithm scales between Linial's famous color reduction \cite{linial92} and the locally iterative algorithm by Barenboim, Elkin and Goldenberg \cite{BEG18}. If nodes tolerate conflicts up to a certain threshold the same algorithm can be used to obtain the defective coloring algorithms of \cite{Kuhn2009WeakColoring,BarenboimEK14} and \cite{BE09,BarenboimEK14}, as well as obtaining a simpler algorithm (as compared to \cite{BEG18,Barenboim16}) to compute low out degree colorings aka \emph{arbdefective colorings}. 
The latter are one of the two crucial ingredients in the state of the art $(\Delta+1)$-coloring algorithm in \cite{MT20}. The second ingredient is a $2$-round \emph{list version} of Linial's color reduction, together with the observation that the degree bound can be replaced with a bound on the outdegree. One can also see our algorithm as an extension of Linial's algorithm, or the other way around: In the setting where nodes can only try one color per round ($k=1$) one wants to get colored with one out of $O(\Delta)$ output colors; this process is guaranteed to be successful in $O(\Delta)$ rounds if vertices try colors in a suitable order. Now, if you want to try more than one color per iteration, that is, you want to compress several rounds of the original algorithm into one iteration,  you need to also mark each trial with the round number in which you would have tried it in the original algorithm, yielding an $O(\Delta^2)$-coloring if you want to execute all $O(\Delta)$ rounds in one iteration. We find it astonishing, that in hindsight many crucial results in this area can be related to the algorithm that was presented in Linial's seminal paper \cite{linial92} roughly 30 years ago.

 On the lower bound side his initial $\Omega(\logstar n)$ bound is still the state of the art. The only progress is in terms of understanding $1$-round algorithms or weak variants of the \LOCAL model \cite{disc16_coloring}. Our paper showed that there is a large discrepancy between iterating the best $1$-round algorithm for $O(1)$ times and what can be achieved by a \emph{'smart'} algorithm that uses $O(1)$ rounds. This suggests that we first need to understand constant-time algorithms, both from the upper and lower bound side, before we can settle the complexity of the $(\Delta+1)$-coloring problem. Another approach would be to attack the coloring problem through lower bounds for ruling sets, as there is recent progress for the latter \cite{BBO20,BSKO21}. A large lower bound for a $(2,r)$-ruling set would imply a lower bound for graph coloring via \Cref{lem:rulingSet}; however, it is unclear whether large enough lower bounds for ruling sets exist.  

We end with an additional observation. We purposely keep the observation informal as we merely include it as an additional intuitive guide for the search of the \emph{right} lower bound questions.  We believe that a formal statement would actually hinder the creativity in this process.
\begin{observation}[informal]
	Modulo a $\log \Delta$-factor the difficult part of the $(\Delta+1)$-coloring problem is to reduce a $(1+\eps)\Delta$ coloring  to a $(\Delta+1)$-coloring. 
\end{observation}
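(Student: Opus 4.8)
\emph{Setup and the easy direction.}
Write $\mathcal{B}$ for the hypothesised algorithm: on any graph of maximum degree at most $\Delta$ that is supplied with a proper $(1+\eps)\Delta$-coloring, $\mathcal{B}$ produces a proper $(\Delta+1)$-coloring in $T$ rounds. One half of the claimed equivalence is immediate: a $(1+\eps)\Delta$-coloring is in particular an $O(\Delta^{2})$-coloring, so any algorithm that turns an $O(\Delta^{2})$-coloring into a $(\Delta+1)$-coloring already turns a $(1+\eps)\Delta$-coloring into a $(\Delta+1)$-coloring, with no loss. The content of the observation is the converse, which I would prove by bootstrapping $\mathcal{B}$ into a reduction from an $O(\Delta^{2})$-coloring to a $(\Delta+1)$-coloring that invokes $\mathcal{B}$ only $O(\log_{1+\eps}\Delta)$ times (and, if one insists on starting from unique identifiers, runs Linial's algorithm once at the start for the unavoidable $+\logstar n$).

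\emph{The bootstrap.}
The plan is a grouping recursion. Given the current proper $m$-coloring with $m>(1+\eps)\Delta$, split the $m$ color classes into $j=\lceil m/((1+\eps)\Delta)\rceil$ groups, each the union of at most $(1+\eps)\Delta$ classes. For a fixed group, the induced subgraph $H$ satisfies $\deg_H(v)\le\deg_G(v)\le\Delta$ for all $v$, and the current coloring restricted to $H$ is a proper coloring of $H$ with at most $(1+\eps)\Delta$ colors, so $\mathcal{B}$ applies to $H$. Run $\mathcal{B}$ on all $j$ groups in parallel, handing the $i$-th group a private palette of $\Delta+1$ colors. Because the groups partition $V$, the combined output is a proper coloring of $G$ with $j(\Delta+1)\le\bigl(m/((1+\eps)\Delta)+1\bigr)(\Delta+1)$, i.e.\ essentially $m/(1+\eps)$, colors, computed in $T$ rounds. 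Iterating this step shrinks $m$ geometrically, and $O(\log_{1+\eps}\Delta)$ iterations suffice to bring $m$ from $O(\Delta^{2})$ down to $O(\Delta)$; as soon as $m\le(1+\eps)\Delta$ a single last call of $\mathcal{B}$ on all of $G$ outputs a $(\Delta+1)$-coloring.

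\emph{Main obstacle and accounting.}
The delicate part is the tail of the recursion when $\eps<1$: the map $m\mapsto\lceil m/((1+\eps)\Delta)\rceil(\Delta+1)$ has a fixed point near $c\cdot(\Delta+1)$ with $c=\Theta(1/\eps)$ — each parallel run of $\mathcal{B}$ spends $\Delta+1$ colors per group and there are at least two groups while $m>(1+\eps)\Delta$ — so grouping alone stalls at $\Theta(\Delta)$ colors. I would finish by the classical sweep that deletes one surplus color class per round: the vertices of a surplus class form an independent set, so they can all move simultaneously to a color of $[\Delta+1]$ avoiding their at most $\Delta$ already-fixed neighbours. This costs $O(\Delta)$ extra rounds; for $\eps\ge 1$ no such tail occurs and the recursion reaches $\Delta+1$ by itself. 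Altogether the round complexity is $\logstar n+O(\log_{1+\eps}\Delta)\cdot T+O(\Delta)$, so the overhead over $T$ is an $O(\log_{1+\eps}\Delta)$ factor plus an additive cleanup term. Sharpening this to the paper's statement that the overhead is only a constant factor when $T=\Delta^{\Omega(1)}$ would require shaving the additive $O(\Delta)$ sweep (and absorbing the $\log_{1+\eps}\Delta$ factor), and I expect precisely that — noting that crossing from $\Theta(\Delta)$ colors to $\Delta+1$ is really a $(\deg+1)$-list-coloring instance on the surplus subgraph, which $\mathcal{B}$ does not solve directly, together with careful bookkeeping of the ceilings and of the color-space sizes along the recursion — to be the main technical hurdle.
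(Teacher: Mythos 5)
Your proposal follows the paper's proof sketch exactly: partition the current $m$ color classes into groups of at most $(1+\eps)\Delta$ classes, run the hypothesised algorithm in parallel on each group with a disjoint output palette of size $\Delta+1$, and iterate so that $m$ shrinks by roughly a $(1+\eps)$-factor per call, giving an $O(\log_{1+\eps}\Delta)$ multiplicative overhead. Your extra observation that the ceiling makes the naive recursion stall near $\Theta(\Delta/\eps)$ colors when $\eps<1$ is a genuine subtlety that the paper's explicitly informal sketch glosses over; patching it with the $O(\Delta)$-round one-class-per-round sweep is fine for the ``modulo a $\log\Delta$-factor'' statement as written, and you are right that removing that additive term (to obtain the paper's side remark about only constant-factor overhead when $T=\Delta^{\Omega(1)}$) would need an extra idea that the paper does not supply.
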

\begin{proof}[Proof sketch]
	Assume an algorithm $\mathcal{A}$ that reduces the number of colors from $(1 + \eps)\Delta$ to
	$\Delta + 1$. Now, assume an input coloring with $m\gg(1+\eps\Delta)$ colors is given. Then, one can chop $[m]$ into $x\approx m/((1+\eps)(\Delta+1))$ disjoint color spaces, each of size $(1+\eps)\Delta$ and run $\mathcal{A}$ on each of them in parallel, using a disjoint output color space for each application.  This uses $x\cdot (\Delta+1)\approx m/(1+\eps)$ output colors, i.e., we have reduced the number of colors by a  constant factor (if $\eps$ is constant). Thus, if we begin with $m=O(\Delta^2)$ colors, we obtain a $(\Delta+1)$-coloring with a $O(\log_{1+\eps} \Delta)$ multiplicative overhead. 
\end{proof}
If $\Delta$ is a large enough $\poly\log n$, then randomized algorithm can very efficiently compute   $(1+\eps)\Delta$-colorings \cite{SW10,CLP20,HN21}.

\section*{Acknowledgments}
This project was partially supported by the European Union's Horizon 2020 Research and  Innovation Programme under grant agreement no. 755839. We thank the several people with whom we have discussed the presented algorithm, in particular, Fabian Kuhn and Janosch Deurer. 

\bibliographystyle{alpha}
\bibliography{references}

\end{document}